\newif\iffull
\newif\ifoldmiddle
\newif\ifoldhard
\newtheorem{theorem}{Theorem}
\newtheorem{corollary}{Corollary}
\newtheorem{claim}{Claim}
\newtheorem{observation}{Observation}
\newtheorem{lemma}{Lemma}
\newcommand{\leaveout}[1]{}
\date{}
\title{Finding large matchings in 1-planar graphs of minimum degree 3}
\author{Therese Biedl
\thanks{David R.~Cheriton School of Computer
Science, University of Waterloo, Waterloo, Ontario N2L 1A2, Canada.
Supported by NSERC. {\it biedl@uwaterloo.ca}} 
\and{Fabian Klute}
\thanks{Algorithms and Complexity Group, TU Wien, Austria, {\it fklute@ac.tuwien.ac.at}.  Research initiated
while the author was visiting the University of Waterloo.}
}
\begin{document}

\maketitle

\begin{abstract}
A {\em matching} is a set of edges without common endpoint.  It was recently shown that every {\em 1-planar graph} (i.e., a graph that can be drawn in the plane with at most one crossing per edge) that has minimum degree 3 has a matching of size at least $\frac{n+12}{7}$, and this is tight for some graphs.  The proof did not come with an algorithm to find the matching more efficiently than a general-purpose maximum-matching algorithm.  In this paper, we give such an algorithm.  More generally, we show that any matching that has no augmenting paths of length 9 or less has size at least $\frac{n+12}{7}$ in a 1-planar graph with minimum degree 3.
\end{abstract}

\section{Introduction}
The {\em matching problem} (i.e., finding a large set of edges in a graph 
such that no two chosen edges have a common endpoint) is one of the
oldest problem in graph theory and graph algorithms, see for example
\cite{Berge,LP86} for overviews.  

To find the maximum matching in a graph $G=(V,E)$, the fastest algorithm is
the one by Hopcroft and Karp if $G$ is bipartite \cite{HK73},
and the one by Micali and Vazirani otherwise (\cite{micaliSqrtAlgorithm1980}, see also
\cite{Vaz12} for further clarifications).  As pointed out in \cite{Vaz12}, 
for a graph with $n$ vertices and $m$ edges
the run-time of the algorithm by Micali and Vazirani is $O(m\sqrt{n})$ in the
RAM model and $O(m\sqrt{n} \alpha(m,n))$ in the pointer model,
where $\alpha(\cdot)$ is the inverse Ackerman function.  For {\em planar graphs} (graphs that
can be drawn without crossing in the plane) there exists a linear-time
approximation scheme for maximum matching \cite{Baker94}, and it can
easily be generalized to so-called $H$-minor-free graphs \cite{DFHT05}
and $k$-planar graphs \cite{GB07}.

For many graph classes, specialized results concerning matchings and
matching algorithms have been found.  To name just a few, every
bipartite $d$-regular graph has a {\em perfect matching} (a
matching of size $n/2$) \cite{Hal56} and it can be found in $O(m\log d)$ 
time \cite{COS01}.  Every 3-regular biconnected graph has a perfect
matching \cite{Petersen} and it can be found in linear time for planar
graphs and in near-linear time for arbitrary graphs \cite{BBD+01}.  Every
graph with a Hamiltonian path has a {\em near-perfect matching} (of
size $\lceil (n{-}1)/2\rceil$); this includes for example the 4-connected
planar graphs \cite{Tutte} for which the Hamiltonian path (and with it
the near-perfect matching) can be found in linear time \cite{CN89}.

For graphs that do not have perfect or near-perfect matchings, one
possible avenue of exploration is to ask for guarantees on the size
of matchings.  One of the first results in this direction is due to Nishizeki and
Baybars \cite{NB79}, who showed that every planar graph with 
minimum degree~3 has a matching of size at least $\frac{n+4}{3}$.
(This bound is tight for some planar graphs with minimum degree 3.)
The proof relies on the Tutte-Berge theorem and does not give an
algorithm to find such a matching (or at least, none faster than any
maximum-matching algorithm).  Over 30 years later, a linear-time 
algorithm to find a matching of this size in planar graphs of minimum degree 3
was finally developed by Franke, Rutter and Wagner \cite{frankeComputingLarge2011}.
The latter paper was a major inspiration for our current work.

In recent years, there has been much interest in {\em near-planar}
graphs, i.e., graphs that may be required to have crossings but
that are ``close'' to planar graphs in some sense.  We are interested
here in {\em 1-planar graphs}, which are those that can be drawn with
at most one crossing per edge.  (Detailed definitions can be found in Section~\ref{sec:background}.)   
See a recent annotated bibliography \cite{KLM17} for
an overview of many results known for 1-planar graphs.
The first author and Wittnebel \cite{BW19} gave matching-bounds for
1-planar graphs of varying minimum degrees, and showed that any 1-planar graph with
minimum degree 3 has a matching of size at least $\frac{n+12}{7}$.  (Again, this
bound is tight for some 1-planar graphs with minimum degree 3.)  

The proof in \cite{BW19} is again via the Tutte-Berge theorem and
does not give rise to a fast algorithm to find a matching of this size.
This is the topic of the current paper.  We give an algorithm that
finds, for any 1-planar graph with minimum degree 3, a matching of
size at least $\frac{n+12}{7}$ in linear time in the RAM model
and time $O(n\alpha(n))$ in the pointer-model.  The algorithm consists
simply of running the algorithm by Micali and Vazirani for a limited
number of rounds (and in particular, does not require that a 1-planar
drawing of the graph is given).  The bulk of the work consists of the
analysis, which states that if there are no augmenting paths of length
9 or less, then the matching has the desired size for graphs with
minimum degree 3.  Along the way, we also prove some bounds obtained
for graphs with higher minimum degree, though these are not tight.

The paper is structured as follows.  After reviewing some background
in Section~\ref{sec:background}, we state the algorithm in 
Section~\ref{sec:algo}.  The analysis proceeds in multiple steps
in Section~\ref{sec:analysis}.
We first delete short flowers from the graph (and account for free
vertices in them directly).  The remaining graph is basically bipartite,
and we can use bounds known for independent sets in 1-planar graphs to
obtain matching-bounds that are very close to the desired goal.
Closing this gap requires a non-trivial modification of the
graph and argument; this is deferred to Section~\ref{sec:appendix_hard}
before we conclude in Section~\ref{sec:conclusion}.

\section{Background}
\label{sec:background}

We assume familiarity with graphs and graph algorithms, see for example
\cite{Die12} and \cite{SchaeferBook}.  Throughout the paper, $G$ is a 
simple graph with $n$ vertices and $m$ edges.  A {\em matching} is
a set $M$ of edges without common endpoints; we say that $e=(x,y)\in M$
is {\em matched} and $x$ and $y$ are {\em matching-partners}.  
$V(M)$ denotes the endpoints of edges in $M$; we call $v\in V(M)$
{\em matched} and all other vertices {\em free}.  An {\em 
alternating walk} is a walk that alternates
between unmatched and matched edges.
An {\em augmenting path} is an
alternating walk that repeats no vertices and begins and ends at a free vertex;
we use {\em $k$-augmenting path} for an augmenting path with at most $k$ edges.
Note that if $P$ is an augmenting
path of $M$ (and viewed as an edge-set), then $(M\setminus P) \cup (P\setminus M)$
is also a matching and has one more edge.  

A {\em drawing} $\Gamma$ of a graph consists of assigning points in $\mathbb{R}^2$
to vertices and simple curves to each edge such that curves of edges end at the
points of its endpoints.  We usually identify the graph-theoretic object (vertex,
edge) with the geometric object (point, curve) that it has been assigned to.
A drawing is called {\em good} (see \cite{SchaeferBook} for details)
if (1) no edge intersects a point of a non-incident vertex, (2) two edges intersect in at most
We only consider {\em good} drawings (see \cite{SchaeferBook} for details)
that avoid degeneracies such an edge going through the point of a non-incident
vertex or two edges with a common endpoint intersecting.
The connected regions of $\mathbb{R}^2\setminus \Gamma$
are called the {\em regions} of the drawing.

A {\em crossing} $c$ of $\Gamma$ is a pair of two edges $(v,w)$ and $(x,y)$
that have a point in their interior in common.
%Observe that a crossing is defined via a pair
%of edges, not via the point in $\Gamma$.  In particular, if we change $\Gamma$
%so that two different edges cross at the point of $c$, then this is considered
%a different crossing.
A {\em crossed} edge is one that has a crossing on it; otherwise it is called
{\em uncrossed}.
\ifoldhard

Every drawing $\Gamma$ defined a {\em rotation scheme}, which gives for every vertex
$v\in V$ the cyclic order of edges incident to $v$.  We say that two edges are
{\em consecutive} if they have exactly one endpoint $v$ in common and appear
consecutive in edge-order at $v$.  
\fi
A drawing $\Gamma$ is called {\em $k$-planar} (or {\em planar} for $k=0$) if every edge has at most $k$ crossings.
A graph is called {\em $k$-planar} if it has a $k$-planar drawing.  While planarity
can be tested in linear time \cite{HT74,BL76}, testing 1-planarity
is NP-hard \cite{GB07}.
\ifoldhard
In a 1-planar drawing, every crossed edge naturally splits into two
{\em half-edges} by removing the crossing point, and regions and be described
via a sequence of edges and half-edges. 
\fi

Fix a 1-planar drawing  $\Gamma$ and consider a crossing $c$ between edges $(v_0,v_2)$ and $(v_1,v_3)$.
% TB: since kite-edges are a lot less important now I shortened this.
Then we could draw edge $(v_i,v_{i+1})$ (for $i=0,\dots,3$ and addition modulo 4) without
crossing by walking ``very close'' to crossing $c$.  We call the pair $(v_i,v_{i+1})$ a {\em potential
kite-edge} and note that if we inserted $(v_i,v_{i+1})$ in the aforementioned manner, then it would be consecutive
with the crossing edges in the cyclic orders of edges around $v_i$ and $v_{i+1}$ in $\Gamma$.

\section{Finding the matching}
\label{sec:algo}

\ifoldmiddle
\def\maxlength{7}
\else
\def\maxlength{9}
\fi

Our algorithm to find a large matching is a one-liner: repeatedly
extend the matching via \maxlength-augmenting paths (i.e., of
length at most \maxlength) until there are no more such paths.

Note that the algorithm does not depend on the knowledge that the
graph is 1-planar and does not require having a 1-planar drawing
at hand.  It could be executed on any graph;
our contribution is to show (in the next section) that if 
it is executed on a 1-planar graph $G$ with minimum degree 3 then
the resulting matching has size at least $\frac{n+12}{7}$.

\paragraph{Running time}
Finding a matching $ M $ in $ G $ such that 
there is no $k$-augmenting path can be done in time $ O(k|E|) $ 
using the algorithm by Micali and Vazirani~\cite{micaliSqrtAlgorithm1980}.
(We state all run-time bounds here in the RAM model; for the pointer
model add a factor of $\alpha(|E|,|V|)$.)
This algorithm runs in phases, each of which has a running time of $ O(|E|) $ and 
increases the length of the minimum-length augmenting path by at least two.
See for example the paper by Bast et al.~\cite{bastMatchingAlgorithms2006} for a more detailed explanation.
Since for 1-planar graphs we have $ |E| \in O(|V|) $ we get a linear time algorithm in the number of vertices of $ G $
to find a matching without \maxlength-augmenting paths. 

\section{Analysis}
\label{sec:analysis}

Assume that $M$ is a matching without augmenting paths of length at most 9,
and let $F$ be the free vertices; $|F|=n-2|M|$.
To analyze the size of $M$, we proceed in three stages.  First we remove some
vertices and matching-edges that belong to short flowers (defined below); these
are ``easy'' to account for.  Next we split the remaining vertices by their
distance (measured along alternating paths) to free vertices.  Since short
flowers have been removed, no edges can exist between vertices of even
small distance; they hence form an independent set.
Using a crucial
lemma from \cite{BW19} on the size of independent sets in 1-planar graphs, 
this shows that $|M|\geq \frac{7}{50}(n+12)$,
which is very close to the desired bound of $\frac{n+12}{7}$.  
The last stage (which does the improvement from $\frac{7}{50}$ to $\frac{1}{7}$) will require 
non-trivial effort and is done mostly out of 
academic interest; this is deferred to Section~\ref{sec:appendix_hard}.

\paragraph{Flowers}

A {\em flower}%
\footnote{Our terminology follows the one in Edmonds'
famous blossom-algorithm \cite{Edm65}.}
 is an alternating walk that begins and ends at 
the same free vertex; we write {\em $k$-flower} for a flower with at
most $k$ edges. We only consider 7-flowers;
Figure~\ref{fig:flowers} illustrates all possible such flowers.
Note that such short flowers split into a path (called {\em stem})
and an odd cycle (the {\em blossom}); we call a flower a
{\em cycle-flower} if the stem is empty.  

\begin{figure}[ht]
\hspace*{\fill}
	\includegraphics[page=1]{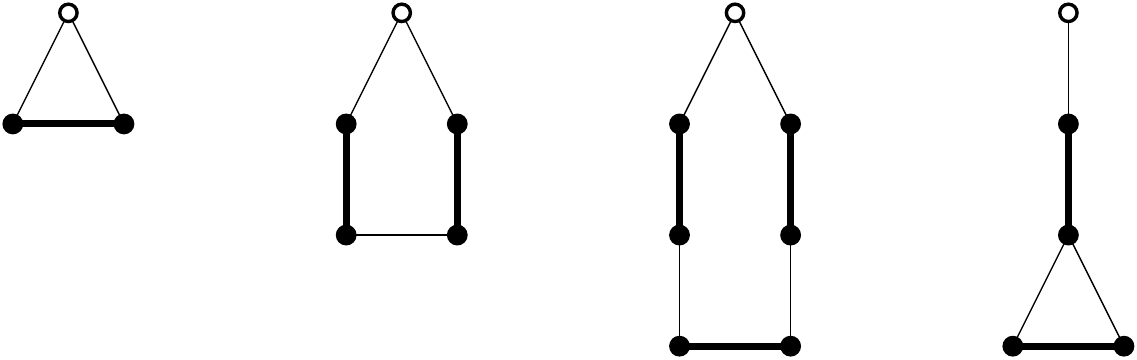}
\hspace*{\fill}
\caption{All possible flowers of length up to 7.  Free vertices
are white, matched edges are thick.}
\label{fig:flowers}
\end{figure}

Let $V_C$ 
(the ``$C$'' reminds of ``cycle'') 
be all vertices that belong to a 7-cycle-flower 
and let $M_C$ and $F_C$ be all
matching-edges and free vertices within $V_C$.

\begin{claim} 
\label{cl:FC}
\label{cl:cycleflower}
$|F_C|\leq |M_C|$.
\end{claim}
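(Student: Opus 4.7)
The plan is to exploit the following closure property: within any 7-cycle-flower, every matched vertex has its matching partner also on the same flower, because the odd alternating cycle pairs consecutive matched vertices into matched edges. Consequently every vertex in $V_C$ is either in $F_C$ or matched to another vertex of $V_C$, so $|V_C|=|F_C|+2|M_C|$ and the claim is equivalent to $|V_C|\ge 3|F_C|$. It therefore suffices to produce, for each $v\in F_C$, a choice of cycle-flower $C_v$ containing $v$ such that the chosen flowers are pairwise vertex-disjoint; each $C_v$ then contributes at least one matching edge to $M_C$, and by disjointness these edges are distinct, yielding $|F_C|\le|M_C|$.

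To prove that a disjoint choice exists, I argue by contradiction: suppose $u\neq v$ in $F_C$ have cycle-flowers $C_u,C_v$ sharing a vertex $w$. Since each flower has a unique free vertex, $w$ must be matched, and the closure property forces its matching partner $w'$ to lie on both flowers as well, so both cycles contain the matching edge $e=(w,w')$. From this shared edge I aim to construct a short augmenting $v$--$u$ walk, contradicting the hypothesis on $M$.

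The construction uses that in a cycle-flower of odd length $2k+1\in\{3,5,7\}$, from the free vertex one can walk around the cycle in either direction to reach any fixed matched vertex via an alternating path, with one direction ending in an unmatched edge and the other in a matched edge. Let $\ell_v(w)$ and $\ell_v(w')$ denote the unmatched-ending walk-lengths in $C_v$, so $\ell_v(w)+\ell_v(w')=2k\le 6$, and similarly for $C_u$. I consider two options for the $v$--$u$ walk through $e$: option A concatenates the $v\to w$ unmatched-ending walk in $C_v$, the edge $e$, and the reverse of the $u\to w'$ unmatched-ending walk in $C_u$; option B swaps the roles of $w$ and $w'$. Both walks are alternating, begin and end with unmatched edges, and have lengths summing to at most $6+6+2=14$, so one option has length at most $7$.

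The main obstacle will be simplicity, since $C_v$ and $C_u$ may share additional vertices beyond $\{w,w'\}$. My plan is to take a minimum-length alternating $v$--$u$ walk $W$ in $C_v\cup C_u$: if $W$ revisits some vertex $x$, then $x$ is matched and the same closure argument forces its matching partner $x'$ to lie on both flowers, so an analogous concatenation through the matching edge at $x$ yields a strictly shorter alternating $v$--$u$ walk, contradicting minimality. Hence $W$ is a simple path of length at most $7$, which is an augmenting path of length at most $9$, contradicting the hypothesis that $M$ has no augmenting path of length at most $9$ and completing the proof.
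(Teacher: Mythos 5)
Your setup is sound up to the construction of the two candidate walks: the closure property (every matched vertex of a cycle-flower has its partner on the flower), the uniqueness of the free vertex on a cycle-flower, the two unmatched-ending paths with $\ell_v(w)+\ell_v(w')=|C_v|-1\le 6$, and the conclusion that one of the two concatenations has length at most $7$ are all correct. The genuine gap is the final step, where you assert that a minimum-length alternating $v$--$u$ walk $W$ in $C_v\cup C_u$ must be simple. Two things fail there. First, a repeated vertex $x$ of $W$ need not lie on \emph{both} flowers: the closure property only places the partner $x'$ on whichever flower contains $x$, so if $W$ doubles back inside $C_v$ alone, the matched edge $(x,x')$ is not shared and your ``analogous concatenation'' (which requires an unmatched-ending approach inside $C_v$ \emph{and} one inside $C_u$) is not even defined. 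Second, and more fundamentally, the ``strictly shorter walk'' contradiction does not follow. Since $x$ is matched and can only be an intermediate vertex, each visit traverses $(x,x')$ exactly once, so $W$ traverses it twice; if both traversals have the same direction one can splice out the middle and genuinely shorten $W$, but if they have opposite directions --- $W=v\,W_1\,x\,(x,x')\,x'\,W_2\,x'\,(x,x')\,x\,W_3\,u$ --- excising the detour leaves two consecutive unmatched edges at $x$, and the replacement walk you build through $(x,x')$ from the cycles is only known to have length at most $7$, which need not beat a non-simple $W$ of length, say, $5$. This opposite-direction configuration is precisely the classical blossom obstruction: shortest alternating walks between free vertices need not be simple, which is the very reason Edmonds' algorithm contracts blossoms. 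Ruling it out in $C_v\cup C_u$ would need a case analysis you do not supply.

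The paper's proof avoids walk surgery entirely, and imitating it repairs (and drastically shortens) your argument: assign each $f\in F_C$ to the matching edge $(v_1,v_2)$ adjacent to $f$ on its flower $f$-$v_1$-$\dots$-$v_k$-$f$. If some $f'\neq f$ is assigned to the same edge, then $f'$ is adjacent to $v_1$ or $v_2$ via its own flower, and either $f'$-$v_2$-$v_1$-$f$ is a $3$-augmenting path or $f'$-$v_1$-$v_2$-$\dots$-$v_k$-$f$ is an augmenting path of length at most $7$; simplicity is automatic here because $f'$ is free while every flower vertex other than $f$ is matched. Keeping one side of the connecting path down to a single unmatched edge is exactly what sidesteps the simplicity problem your two-sided concatenation runs into, and it makes both the pairwise-disjointness of the chosen flowers and the identity $|V_C|=|F_C|+2|M_C|$ unnecessary: one gets the injection $F_C\to M_C$ directly.
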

\begin{proof}
For every $f\in F_C$ there exists some 7-cycle-flower $f$-$v_1$-$v_2$-$\dots$-$v_k$-$f$
with $k\in \{2,4,6\}$.
Assign $f$ to matching-edge $(v_1,v_2)\in M_C$.  Assume for contradiction that
another vertex $f'\in F_C$ was also assigned to $(v_1,v_2)$.  Then $f'$ is adjacent to one of
$v_1,v_2$.  If it is $v_2$, then $f'$-$v_2$-$v_1$-$f$ is a 3-augmenting path.  If it is $v_1$,
then $f'$-$v_1$-$\dots$-$v_k$-$f$ is a 7-augmenting path, see Figure~\ref{fig:cycleflower}.
\end{proof}

From now on we will only study the graph $G\setminus V_C$.  
Let $F_B$ 
(the ``$B$'' reminds of ``blossom'') 
be all those free vertices $f$ that are not in $F_C$ and that belong
to a 7-flower.  By $f\not\in F_C$ this flower has a non-empty stem, 
which is possible only if its length is exactly 7 and the stem has two edges
$f$-$s$-$t$ while the blossom is a 3-cycle $t$-$x_0$-$x_1$-$t$.  Furthermore
$(s,t)$ and $(x_0,x_1)$ are matching-edges.
Let $M_B$ be the set of such matching-edges $(x_0,x_1)$ 
i.e., matching-edges that 
belong to the blossom of such a 7-flower. Note that we do {\em not} 
include the matching-edge $(s,t)$ in $M_B$ (unless it belongs to a different
7-flower where it is in the blossom).  
Let $T_B$ be the set of such vertices $t$,
i.e., vertices that belong to a 7-flower and belong to both the stem and 
the blossom.
Set $V_B=T_B\cup V(M_B)$.

\begin{claim} 
\label{cl:TDelta}
$|T_B|\leq |M_B|$.
\end{claim}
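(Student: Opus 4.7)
The plan is to assign each $t \in T_B$ to the blossom-edge $(x_0,x_1) \in M_B$ of a chosen 7-flower witnessing $t$, exactly in the style of the proof of Claim~\ref{cl:cycleflower}, and then show this map is injective; the bound $|T_B| \leq |M_B|$ follows immediately.

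Suppose for contradiction that two distinct $t \neq t' \in T_B$ were both assigned to $(x_0,x_1)$. The respective flowers have the form $f$-$s$-$t$-$x_0$-$x_1$-$t$-$s$-$f$ and $f'$-$s'$-$t'$-$x_0$-$x_1$-$t'$-$s'$-$f'$, where $(s,t),(s',t'),(x_0,x_1) \in M$ and $f,f' \in F_B$. Because the matching edges $(s,t)$ and $(s',t')$ are either vertex-disjoint or coincide, there are two main cases. If they are vertex-disjoint then $s,t,x_0,x_1,t',s'$ are six distinct matched vertices; then $f \neq f'$ yields the 7-augmenting path $f$-$s$-$t$-$x_0$-$x_1$-$t'$-$s'$-$f'$, contradicting the hypothesis on $M$, whereas $f=f'$ turns the same walk into a 7-cycle-flower through $f$, contradicting $f \notin F_C$.

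The step I expect to be the main obstacle is the overlap case, $\{s,t\} = \{s',t'\}$; since $t \neq t'$ this forces $s=t'$ and $t=s'$, so both flowers in fact share the single matching edge $(s,t)$. The flower at $f'$ then tells us $f'$ is adjacent to $s'=t$. If $f \neq f'$, then $f$-$s$-$t$-$f'$ is already a 3-augmenting path; if $f=f'$, then $f$-$s$-$t$-$f$ is a 3-cycle-flower, again contradicting $f \in F_B$. Thus in every subcase the shared-blossom assumption yields either a short augmenting path or a short cycle-flower, proving injectivity and the claim.
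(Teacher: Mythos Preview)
Your proof is correct and follows the same injective-assignment approach as the paper: assign each $t\in T_B$ to its blossom edge $(x_0,x_1)\in M_B$ and derive from any double assignment the alternating walk $f$-$s$-$t$-$x_0$-$x_1$-$t'$-$s'$-$f'$, which is a 7-augmenting path or a 7-cycle-flower. You are in fact slightly more careful than the paper, which only records $s\neq s'$ and does not explicitly address the overlap possibility $\{s,t\}=\{s',t'\}$; your treatment of that case (via a 3-augmenting path or a 3-cycle-flower at $f$) correctly shows it to be vacuous, so the two arguments coincide.
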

\begin{proof}
Assign each $t\in T_B$ to a matching-edge $(x_0,x_1)\in M_B$ 
that is within the same blossom of some 7-flower of $G\setminus V_C$.
Assume for contradiction that some other vertex $t'\in T_B$ is also assigned to $(x_0,x_1)$.  Let $t$-$s$-$f$ and $t'$-$s'$-$f'$ be the stems of the 7-flowers
containing $t$ and $t'$,
and note that $s\neq s'$ since they are matching-partners
of $t\neq t'$.  This gives an alternating path $f$-$s$-$t$-$x_0$-$x_1$-$t'$-$s'$-$f'$, see Figure~\ref{fig:blossom}.  Depending on whether
$f=f'$ this is a 7-augmenting path or 7-cycle-flower; the former contradicts the choice of $M$ and the latter
that $(x_0,x_1)\not\in M_C$.
\end{proof}

\begin{figure}[ht]
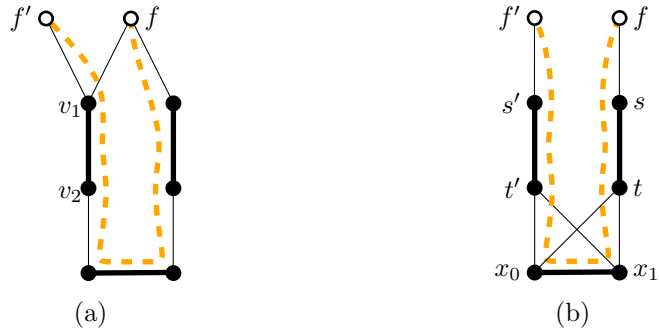

\hspace*{\fill}
\subfigure[~]{\includegraphics[page=14]{therese.pdf}
\label{fig:cycleflower}}
\hspace*{\fill}
\subfigure[~]{\includegraphics[page=11]{therese.pdf}
\label{fig:blossom}}
\hspace*{\fill}
\caption{Augmenting paths found in the proofs of 
(a) Claim~\ref{cl:cycleflower} and
(b) Claim~\ref{cl:TDelta}.}
\label{fig:backedge}
\end{figure}

\paragraph{The auxiliary graph $H$}
For any vertex $v$, let the {\em distance to a free vertex} be the number of
edges in a shortest alternating path from a free vertex to $v$.  Let $D_k$
be the vertices of distance $k$ to a free vertex, and observe that
there are no edges from $D_0$ to $D_3$, else there would be
a shorter alternating path.   We consider these sets as they are found in the
graph $G\setminus V_C\setminus V_B$.

\begin{observation}
\label{obs:horizontal}
In graph $G\setminus V_C\setminus V_B$, there are no matching-edges within $D_k$ for $k = 1$ and $ k = 3 $, and no edges at all within $D_k$ for $k = 0$ and $ k = 2 $.
\end{observation}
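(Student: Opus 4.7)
The plan is to treat each of the four cases by contradiction, using a uniform recipe: assume a violating edge exists, splice it together with shortest alternating witness paths from free vertices to its two endpoints, and read off either an augmenting path of length at most $9$ (contradicting the defining property of $M$) or a $7$-flower that would have placed the endpoints into $V_C \cup V_B$ (contradicting that we work in $G\setminus V_C\setminus V_B$).

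The cases $k=0,1,2$ should be short. For $k=0$, an edge between two free vertices is itself a $1$-augmenting path. For $k=1$, a matched edge $(u,v)$ with $u,v\in D_1$ combines with the two length-$1$ witnesses $(f_u,u)$ and $(v,f_v)$ into an alternating walk $f_u$-$u$-$v$-$f_v$ of length $3$; if $f_u\neq f_v$ this is a $3$-augmenting path, and if $f_u=f_v$ it is a $3$-cycle-flower, so $u,v\in V_C$. For $k=2$, any vertex of $D_2$ is matched to its witness in $D_1$, so a matched intra-$D_2$ edge is immediately ruled out by uniqueness of matching partners; an unmatched intra-$D_2$ edge $(u,v)$ together with the witnesses yields an alternating walk $f_u$-$a_1$-$u$-$v$-$b_1$-$f_v$ of length $5$, giving either a $5$-augmenting path or a $5$-cycle-flower that forces $u,v\in V_C$.

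The main obstacle is $k=3$, where I need to be careful about how the two witness paths can coincide. Suppose $(u,v)$ is matched with $u,v\in D_3$, and take shortest witness paths $f$-$a_1$-$a_2$-$u$ and $f'$-$b_1$-$b_2$-$v$ (each with the last edge unmatched, as required by alternation). Concatenating through $(u,v)$ produces a length-$7$ alternating walk $W = f$-$a_1$-$a_2$-$u$-$v$-$b_2$-$b_1$-$f'$. Before reading off a flower I first observe that the shortest-distance definition forces $a_1,b_1\in D_1$ and $a_2,b_2\in D_2$ exactly (any shorter alternating access to $a_2$ or $b_2$ would lower the distance of $u$ or $v$ below $3$), so the sets $D_1,D_2,D_3$ are pairwise disjoint on these vertices and the only possible vertex coincidences along $W$ are $a_1=b_1$ and/or $a_2=b_2$. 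Uniqueness of matching partners immediately forces $a_1=b_1$ to imply $a_2=b_2$.

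With this in hand the endgame is a short classification. If $f\neq f'$ and $W$ is simple, $W$ is a $7$-augmenting path. If $f=f'$, then $W$ is a $7$-flower of $M$; by the length-$7$ flower census given right before the observation (cycle-flower of blossom length $7$, or stem of length $2$ with blossom of length $3$), the matched edge $(u,v)$ either lies on a $7$-cycle-flower, giving $u,v\in V_C$, or sits as the non-base edge of a $3$-blossom with stem $f$-$a_1$-$a_2$, giving $(u,v)\in M_B$ and hence $u,v\in V_B$. Either conclusion contradicts $u,v\in G\setminus V_C\setminus V_B$, completing the case. The only delicate point, and the one that needs to be written out carefully, is confirming that no mixed coincidence between the two witness paths can sneak in a structure that is neither a bona fide augmenting path nor a flower captured by the $V_C, V_B$ removal.
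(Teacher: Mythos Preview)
Your approach is exactly the paper's: the paper compresses the whole argument into a single sentence (``such an edge together with the alternating paths of length $k$ from free vertices to its ends forms a $7$-augmenting path or a $7$-flower''), and you have unpacked that sentence case by case.

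Your unpacking is correct, but the case split in $k=3$ is not exhaustive as written. You handle ``$f\neq f'$ and $W$ simple'' and ``$f=f'$'', but you omit ``$f\neq f'$ and $W$ not simple'', i.e.\ the situation $a_1=b_1$, $a_2=b_2$ with distinct free endpoints. You flag this at the end as the ``delicate point'' but do not actually dispatch it. The fix is immediate: once $a_2=b_2$, ignore $f'$ entirely and observe that $f$-$a_1$-$a_2$-$u$-$v$-$a_2$-$a_1$-$f$ is itself a $7$-flower with stem $f$-$a_1$-$a_2$ and blossom $a_2$-$u$-$v$-$a_2$, so $(u,v)\in M_B$ and $u,v\in V_B$, contradiction. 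With that one line added, your proof is complete and strictly more detailed than the paper's.
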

\begin{proof}
If there was such an edge $(v,v')$, then it, together with the alternating paths of length $k$ that 
lead from free vertices to $v,v'$, form a 7-augmenting path or a 7-flower.
\end{proof}

From now on, we will only study the subgraph $H$ 
induced by $D_0\cup \dots \cup D_3$, noting again that this does
{\em not} include the vertices in $V_C\cup V_B$.  
For ease of referring to them, we rename the vertices of $H$ as follows
(see also Figure~\ref{fig:setup}):
\begin{itemize}
\item $F_H=F\setminus F_C=D_0$ are the free vertices in $H$.
\item $S=D_1$ are the vertices adjacent to free vertices in $H$.
\item $T_H=D_2$ are the vertices in $H$
	that have matching-partners in $S$ and are not in $S$.
\item $U=D_3$ are the vertices in $H$
	that are adjacent to $T_H$ and not in $F\cup S\cup T_H$.
\end{itemize}

\begin{figure}[ht]
	\centering
	\includegraphics[page=1]{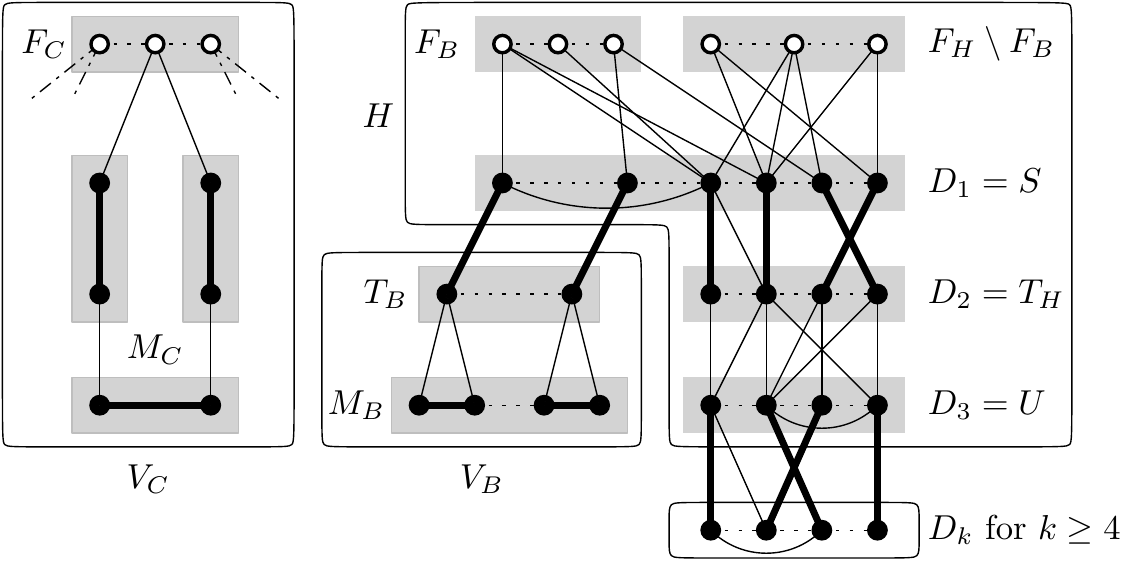}
	\caption{Illustration of the partitioning of edges and vertices. 
		The free vertices $F = F_C \cup F_H$, the matching edges $M = M_C \cup M_S \cup M_U$, and the remaining vertices $S\cup T_H\cup U$ in $H$.}
	\label{fig:setup}
\end{figure}

The following shortcuts will be convenient.  For any vertex sets $A,B$,
an {\em $A$-vertex}
is a vertex in $A$, an {\em $A$-neighbour} is a neighbour of a vertex in $A$,
and an {\em $AB$-edge} is an edge connecting a vertex in $A$ with a vertex in $B$.  
Using Observation~\ref{obs:horizontal} and the definition of $V_C$
(which includes the {\em entire} flower) and $V_B$ (which includes
{\em both} ends of the matching-edge) one easily verifies the following:

\begin{observation}
\label{obs:Hbipartite}
\begin{itemize}
\item There are no matching-edges within $S$ or within $U$.
\item There are no edges within $F_H$ or within $T_H$.
\item The matching-partner of an $S$-vertex is in $T_H\cup T_B$.
\item The matching-partner of a $U$-vertex is not in $H$.
\item All neighbours of an $F_H$-vertex belong to $S$ or are not in $H$.
\item All neighbours of a $T_H$-vertex belong to $S\cup U$ or are not in $H$.
\end{itemize}
\end{observation}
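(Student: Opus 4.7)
I plan to verify the six bullets one by one, using only Observation~\ref{obs:horizontal}, the remark in the paragraph preceding it that no edges run between $D_0$ and $D_3$, and the definitions of $V_C$ and $V_B$. Distances are measured throughout in $G\setminus V_C\setminus V_B$, and the phrase ``outside $H$'' will cover both vertices in $V_C\cup V_B$ and vertices of distance exceeding three.

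The first two bullets (no matching-edges inside $S=D_1$ or $U=D_3$, no edges at all inside $F_H=D_0$ or $T_H=D_2$) are direct restatements of Observation~\ref{obs:horizontal}. For the matching-partner statements I would use the standard extension argument: the matching edge at a vertex $v\in D_k$ extends the shortest alternating path to $v$ by one edge, constraining where the partner can sit. For an $S$-vertex $s$ with partner $t$, this either places $t$ at distance at most two in $G\setminus V_C\setminus V_B$ or forces $t\in V_C\cup V_B$. In the first case $t\in D_0$ is ruled out ($t$ is matched) and $t\in D_1$ would give either a $3$-cycle-flower (impossible since $s\notin V_C$) or a $3$-augmenting path (forbidden by the hypothesis on $M$), leaving $t\in D_2=T_H$. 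In the second case, the observation that the matching edge at $t$ is determined by the flower defining its membership rules out $t\in V_C$ and $t\in V(M_B)$ (each would force $s$ itself into $V_C\cup V_B$), leaving $t\in T_B$. For a $U$-vertex $u$ with partner $u'$, Observation~\ref{obs:horizontal} rules out $u'\in D_3$, and the same extension argument rules out $u'\in D_0\cup D_1\cup D_2$ (for the $D_2$ case I would use the already-derived fact that partners of $D_2$-vertices lie in $D_1=S$, hence not in $D_3$), so $u'$ lies outside $H$.

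The last two bullets are pure distance arguments. Any neighbour $v\in H$ of a free vertex $f\in D_0$ cannot sit in $D_0$ (no internal edges) or in $D_3$ (the preceding remark), so the length-one alternating edge $f$-$v$ forces $v\in D_1=S$. Any neighbour $v\in H$ of a $T_H$-vertex $t\in D_2$ cannot sit in $D_2$ (no internal edges) nor in $D_0$ (else $t$ would be at distance one from a free vertex), leaving $v\in D_1\cup D_3=S\cup U$. I anticipate no real obstacle; the only subtle step is the case analysis in the third bullet, where one must check that when the partner of an $S$-vertex is pushed outside the distance layers it lands precisely in $T_B$, with the alternatives $V_C$ and $V(M_B)$ pulling $s$ along with them and so being incompatible with $s\in H$.
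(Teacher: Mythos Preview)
Your proposal is correct and follows essentially the same approach the paper intends: the paper does not spell out a proof but merely points to Observation~\ref{obs:horizontal} together with the facts that $V_C$ contains the \emph{entire} cycle-flower and $V_B$ contains \emph{both} ends of each $M_B$-edge, and your argument is precisely a careful unpacking of those hints. The only minor redundancy is that in the third bullet you re-derive the absence of matching-edges inside $D_1$ via the $3$-flower/$3$-augmenting-path dichotomy, when you could simply cite Observation~\ref{obs:horizontal} directly; this does no harm.
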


Let $M_S$ be the set of matching-edges incident to $S$.
Let $M_U$ be the matching-edges incident to $U$.
Since there are no matching-edges within $S$ or $U$, we have
$|M_S|=|S|$ and $|U|=|M_U|$.

We stated earlier that any neighbour of $F_H$ is either in $S$ or not in
$H$.  The latter is actually impossible (though this is non-trivial), and
likewise for $T_H$.

\begin{figure}
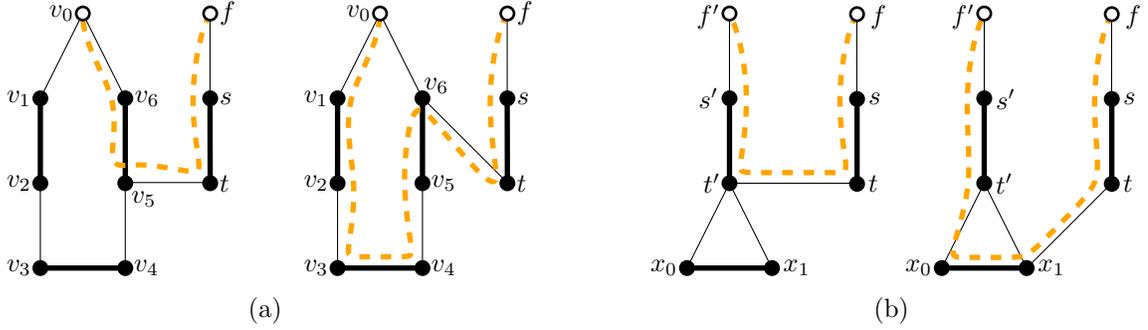

	\hspace*{\fill}
	\subfigure[~]{\includegraphics[page=2]{therese.pdf}
	\label{fig:backedge_cycle}}
	\hspace*{\fill}
	\subfigure[~]{\includegraphics[page=3]{therese.pdf}
	\label{fig:backedge_blossom}}
	\caption{Augmenting paths found in the proofs of
		(a) Lemma~\ref{lem:no_back_edge}, $t\in T_H$ has a neighbour in $V_C$.
		(b) Lemma~\ref{lem:no_back_edge}, $t\in T_H$ has a neighbour in $V_B$.}
\end{figure}

\begin{lemma}
\label{lem:no_back_edge}
No vertex in $F_H\cup T_H$ has a neighbour in $G$ that is outside $H$.
\end{lemma}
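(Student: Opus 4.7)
The plan is to assume for contradiction that some $v \in F_H \cup T_H$ has a neighbour $w$ in $G$ with $w \notin H$, and in every case exhibit an augmenting path of length at most $9$, contradicting the choice of $M$. One observation is used throughout: the edge $vw$ is unmatched. If $v \in F_H$ this is because $v$ is free; if $v \in T_H$ this is because $v$'s matching-partner lies in $S \subseteq H$ by Observation~\ref{obs:Hbipartite} and so cannot be $w$. I also use, in the $T_H$ case, the $2$-edge alternating path $f'$-$s$-$v$ from some free vertex $f'$ ending on the matched edge $sv$, which is immediate from $v \in D_2$.

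The first case is that $w$ is not in $V_C \cup V_B$. Then $w$ survives in $G \setminus V_C \setminus V_B$, and $w \notin H$ means that in this subgraph $w$ has distance at least $4$ from every free vertex. However I can reach $w$ by an alternating walk of length at most $3$: use the edge $vw$ alone if $v \in F_H$, or the concatenation $f'$-$s$-$v$-$w$ if $v \in T_H$ (alternation is fine since $sv$ is matched and $vw$ is unmatched). This contradicts the distance assumption.

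The remaining cases handle $w$ inside a flower that was removed. If $w \in V_C$, let $w$ sit on a cycle-flower $f$-$v_1$-$\cdots$-$v_k$-$f$ with $k \in \{2,4,6\}$; either $w=f$ (in which case $vw$, possibly with the approach to $v$ prepended, is already augmenting) or $w = v_i$ for some $1 \le i \le k$. In the latter case exactly one of the two traversals around the flower from $v_i$ to $f$ starts with a matched edge, and is therefore an alternating walk; its length is at most $k+1 \le 7$. Prepending $vw$, and the approach to $v$ when $v \in T_H$, yields an augmenting path of length at most $7$ (if $v \in F_H$) or $9$ (if $v \in T_H$); see Figure~\ref{fig:backedge_cycle}. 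The case $w \in V_B$ is analogous: the $7$-flower witnessing $w$ has the form $f$-$s^*$-$t$-$x_0$-$x_1$-$t$-$s^*$-$f$, so $w \in \{t, x_0, x_1\}$, and a matched-first alternating walk from $w$ to $f$ of length at most $4$ can be read directly off the blossom; together with the prefix this gives an augmenting path of length at most $5$ or $7$; see Figure~\ref{fig:backedge_blossom}. The only delicate point is bookkeeping alternation at the splice, which is automatic since $vw$ is unmatched and each flower-walk begins with a matched edge; the main ``obstacle'' is thus simply making sure the worst-case length $9$ is never exceeded, which constrains exactly why forbidding $9$-augmenting paths (rather than only $7$-augmenting paths) is needed for the $T_H$ subcase of a length-$7$ cycle-flower.
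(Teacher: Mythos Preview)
Your argument follows the same three-case split as the paper's proof ($w \notin V_C \cup V_B$, $w \in V_C$, $w \in V_B$), and the first two cases are essentially correct.  Two small remarks there: the matched-first traversal from $v_i$ to the free vertex of a cycle-flower has length at most $k \leq 6$, not $k{+}1$ (your final bounds $7$ and $9$ are nonetheless right); and you should say why the two free endpoints are distinct --- this is automatic here because the cycle-flower's free vertex lies in $F_C$, while $f'$ (respectively $v$) lies in $F_H$, and these sets are disjoint by definition.  The paper makes this distinctness explicit.

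There is, however, a genuine gap in the $V_B$ case.  The free vertex $f$ of the witnessing $7$-flower $f$-$s^*$-$t$-$x_0$-$x_1$-$t$-$s^*$-$f$ is \emph{not} in $F_C$; by definition of $F_B$ it lies in $F\setminus F_C = F_H$.  Hence nothing you have written rules out $f=f'$ (when $v\in T_H$) or $f=v$ (when $v\in F_H$).  In either situation your alternating walk closes up into a cycle-flower of length at most $7$ through $v$ rather than an augmenting path, so the contradiction you claim does not yet follow.  The paper confronts exactly this: it explicitly states that the resulting walk is ``a $5$-augmenting path or a $5$-cycle-flower'' (respectively $7$-), and then disposes of the cycle-flower alternative by observing that a $\leq 7$-cycle-flower through $v$ would force $v\in V_C$, contradicting $v\in F_H\cup T_H$.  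Adding this one observation closes your argument.
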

\begin{proof}
First observe that no edge can connect a vertex in $F_H\cup T_H=D_0\cup D_2$ with a vertex $z\in D_k$ for $k\geq 4$ since $z$ would have been added to $D_1=S$ or $D_3=U$ instead.  So we must only show that no vertex in $F_H\cup T_H$ has a neighbour in $V_C\cup V_B$.
We show the claim only for $t\in T_H$; the proof is similar (and even easier) for $f\in F_H$ by replacing the path $t$-$s$-$f$ defined below with just $f$.  

Figure~\ref{fig:backedge_cycle} illustrates the following.
Fix some $t\in T_H$, let $s\in S$ be its matching-partner and let $f\in F_H$ be an arbitrary free vertex incident to $s$.  
Assume for contradiction that $t$ has a neighbour $v_i$ in $V_C$, so $v_i$ belongs to some
7-cycle-flower $v_0$-$v_1$-$\dots$-$v_k$-$v_0$ where $k\in \{2,4,6\}$ and $v_0\in F$.  Note that $v_0\neq f$
since $v_0\in F_C$ while $f\in F_H$.  If $i$ is odd then 
$f$-$s$-$t$-$v_i$-$\dots$-$v_k$-$v_0$ is a 9-augmenting path, and if $i$ is even then
$f$-$s$-$t$-$v_i$-$v_{i-1}$-$\dots$-$v_1$-$v_0$ is a 9-augmenting path; both are impossible.

Now consider some $(x_0,x_1)\in M_B$ that belongs to a 7-flower $f'$-$s'$-$t'$-$x_0$-$x_1$-$t'$-$s'$-$f'$ where $(s',t')$ is a matching-edge and $t'\in T_B$.  
Note that $t'\neq t$ (hence $s'\neq s$) since $t'\in T_B$ while $t\in T_H$.  
If $t$ and $t'$ are adjacent, then $f$-$s$-$t$-$t'$-$s'$-$f'$
is a 5-augmenting path or a 5-cycle-flower.  If $t$ and $x_i$ are adjacent for $i\in \{0,1\}$, then $f$-$s$-$t$-$x_i$-$x_{1-i}$-$t'$-$s'$-$f'$
is a 7-augmenting path or 7-cycle-flower.  
See Figure~\ref{fig:backedge_blossom}.
There cannot be such augmenting paths, and no such cycle-flowers either since $t\not\in T_C$.
\end{proof}

In particular, if a vertex in $F_H\cup T_H$ had degree $d$ in $G$, then it also
has degree $d$ in $H$; this will be important in obtaining matching-bounds
below.

\paragraph{Minimum degree 3}
With this, we can prove our first matching-bound.    We need the following lemma shown by
Biedl and Wittnebel: 

\begin{lemma}[\cite{BW19}]
\label{lem:BW}
Let $G$ be a simple 1-planar graph.  Let $A$ be a non-empty independent
set of $G$ where all vertices in $A$ have degree 3 or more in $G$.
Let $A_d$ be the vertices of degree $d$ in $A$.
Then 
$$2|A_3|+\sum_{d>3} (3d-6) |A_d| \leq 12|V\setminus A|-24.$$
\end{lemma}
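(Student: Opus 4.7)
The inequality has the shape of an Euler-formula discharging bound, so I would prove it by discharging on the planarization of a 1-planar drawing of $G$. Fix such a drawing and replace each crossing by a dummy degree-$4$ vertex to obtain a plane graph $\tilde G$ whose vertex set decomposes as $A \cup B \cup X$, with $B=V\setminus A$ and $X$ the set of crossing-dummies. Since $A$ is independent in $G$, every edge of $\tilde G$ incident to an $A$-vertex has its other endpoint in $B \cup X$, so the local structure around each $A$-vertex is entirely controlled by $B$ and $X$.

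Assign initial charges $\mu(v) = 3\deg_{\tilde G}(v) - 12$ to each vertex $v$ and $\mu(f) = 3\deg(f) - 12$ to each face $f$ of $\tilde G$. A routine calculation from Euler's formula shows that the total initial charge equals $-24$, matching the constant on the right-hand side of the target. Translated into this language, the claim becomes: there exist local discharging rules that, after application, leave each $A_3$-vertex with charge $\geq 2$ (a gain of $5$ over its initial $-3$), each $A_d$-vertex for $d \geq 4$ with charge $\geq 3d-6$ (a gain of $6$ over its initial $3d-12$), each $B$-vertex with charge $\geq -12$, and each crossing-dummy and face with charge $\geq 0$. Substituting these lower bounds into $\sum \mu_{\text{final}} = -24$ yields exactly $2|A_3| + \sum_{d > 3}(3d-6)|A_d| \leq 12|B| - 24$.

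The discharging rules push charge from $B$-vertices (whose surplus grows linearly with $\tilde G$-degree) and from large faces (initial charge $3|f|-12 \geq 0$ for $|f| \geq 4$) toward $A$-vertices and triangular faces (initial charge $-3$) along vertex-face incidences of $\tilde G$. A useful structural observation is that no triangular face of $\tilde G$ can contain two $A$-vertices: such a face would force a direct $A$-$A$ edge in $G$, contradicting independence of $A$. Hence every triangular face has one of the types $BBB$, $BBA$, $BBX$, $BAX$, and in particular contains at least one $B$-vertex that can serve as a donor.

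The main obstacle is the case analysis of faces and $A$-vertices that touch crossing-dummies. A crossing-dummy has initial charge exactly $0$ and cannot donate, so compensation must come from the $B$-vertices on the other side of the crossing; the rules must therefore track how each crossed edge traverses the drawing. A second delicate point is that an $A_3$-vertex whose three $G$-neighbours are low-degree $B$-vertices still needs to accrue $5$ units, and this may require combining direct donations from $B$-neighbours with contributions pulled through adjacent large faces. Carrying out this case analysis cleanly --- in the same spirit as discharging proofs of edge bounds for 1-planar graphs --- is where the bulk of the effort lies.
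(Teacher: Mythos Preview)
The paper does not prove this lemma at all: it is quoted verbatim from \cite{BW19} and used as a black box, so there is no in-paper argument to compare your proposal against.

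As for the plan itself, the discharging framework you set up is sound. The charge assignment $\mu(v)=3\deg(v)-12$, $\mu(f)=3\deg(f)-12$ on the planarization does sum to $-24$, and your target final charges ($\geq 2$ on $A_3$, $\geq 3d-6$ on $A_d$, $\geq -12$ on $B$, $\geq 0$ on dummies and faces) do rearrange to the stated inequality. Your triangle classification is also correct once one adds the observation you left implicit: in the planarization of a 1-planar drawing no two crossing-dummies are adjacent (each original edge has at most one crossing), so a triangular face contains at most one dummy, ruling out $AXX$ and $XXX$ and forcing at least one $B$-vertex in every triangle.

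What remains is genuinely the hard part. You acknowledge this, but note that the budget arithmetic is tight in places: a $B$-vertex of $\tilde G$-degree $d_B$ can give away at most $3d_B$ units in total, yet it may be asked to feed several incident $A$-neighbours (each wanting $5$ or $6$) \emph{and} several incident triangles (each wanting $3$). The configurations around crossing-dummies with two $A$-neighbours (which can occur, one per crossing edge) are where the rules must be designed carefully. So the plan is a credible outline in the spirit of standard 1-planar discharging proofs, but it is not yet a proof; the paper simply imports the finished result from \cite{BW19}.
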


\begin{lemma}
\label{lem:middle}
We have (i) $|F_H|\leq 6|S|-12$ and (ii) $|F_H|+|T_H|\leq 6|S|+6|U|-12$.
\end{lemma}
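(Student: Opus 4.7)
The plan is to apply Lemma~\ref{lem:BW} to two carefully chosen induced subgraphs of $G$, once with $A=F_H$ and once with $A=F_H\cup T_H$.  The point of restricting to a subgraph is that Lemma~\ref{lem:BW} charges $|V\setminus A|$ on the right-hand side, so we want to keep that set as small as possible while preserving both independence of $A$ and the degree-3 hypothesis on its vertices.

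For part (i), let $G_1$ be the subgraph of $G$ induced by $F_H\cup S$; it inherits 1-planarity as a subgraph.  By Observation~\ref{obs:Hbipartite}, $F_H$ has no internal edges, so $F_H$ is independent in $G_1$.  I then need to show that every $v\in F_H$ still has degree at least $3$ in $G_1$.  Since $G$ has minimum degree $3$, it suffices to prove that in $G$ every neighbour of $v$ lies in $S$: Lemma~\ref{lem:no_back_edge} rules out neighbours in $V_C\cup V_B$, Observation~\ref{obs:Hbipartite} rules out neighbours in $F_H\cup T_H$, and the distance argument preceding Observation~\ref{obs:horizontal} rules out neighbours in $D_3=U$ (a $D_0D_3$-edge would shorten an alternating path).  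The only remaining option for a neighbour of $v$ in $H$ is $D_1=S$, as required.  Applying Lemma~\ref{lem:BW} now gives $2|F_H|\leq 2|A_3|+\sum_{d>3}(3d-6)|A_d|\leq 12|S|-24$, which is exactly (i).

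For part (ii), let $G_2$ be the subgraph induced by $F_H\cup S\cup T_H\cup U$ and set $A=F_H\cup T_H$.  To see that $A$ is independent, use Observation~\ref{obs:Hbipartite}: neither $F_H$ nor $T_H$ has internal edges, and there is no $F_H T_H$-edge because (as just argued) every $F_H$-neighbour lies in $S$.  To see that every $v\in A$ has degree $\geq 3$ in $G_2$, I again show that all of $v$'s neighbours in $G$ lie in $V(G_2)$: for $v\in F_H$ this was shown above, and for $v\in T_H$ Lemma~\ref{lem:no_back_edge} excludes $V_C\cup V_B$ while Observation~\ref{obs:Hbipartite} restricts the remaining neighbours to $S\cup U$.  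Lemma~\ref{lem:BW} then yields $2|F_H|+2|T_H|\leq 12(|S|+|U|)-24$, which is (ii).

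The only real obstacle is checking degree preservation under the restriction to $G_1$ respectively $G_2$; once that is in hand, both inequalities drop straight out of Lemma~\ref{lem:BW} with the trivial estimate $2|A|\leq 2|A_3|+\sum_{d>3}(3d-6)|A_d|$.  The one caveat is that Lemma~\ref{lem:BW} requires $A$ to be non-empty; if $F_H=\emptyset$ then $S=T_H=U=\emptyset$ as well (all defined via alternating distance from $F_H$), and both bounds become vacuous in the sense that there is nothing to distribute, so this edge case can be dispatched separately or simply noted.
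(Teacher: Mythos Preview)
Your proof is correct and follows essentially the same approach as the paper: apply Lemma~\ref{lem:BW} to $F_H$ in $H[F_H\cup S]$ for (i) and to $F_H\cup T_H$ in the full $H$ for (ii), using Observation~\ref{obs:Hbipartite} and Lemma~\ref{lem:no_back_edge} to guarantee independence and degree preservation. Your justification is a bit more detailed than needed (the fifth and sixth bullets of Observation~\ref{obs:Hbipartite} already locate all $F_H$- and $T_H$-neighbours inside $H$ in $S$ resp.\ $S\cup U$), and you rightly flag the $F_H=\emptyset$ edge case that the paper leaves implicit.
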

\begin{proof}
Consider first the subgraph of $H$ induced by $F_H$ and $S$.
By Observation~\ref{obs:Hbipartite} and Lemma~\ref{lem:no_back_edge} 
any vertex in $F_H$ has degree at least 3
in this subgraph, and they form an independent set.  Consider the inequality of Lemma~\ref{lem:BW}. 
Any vertex in $F_H$ contributes at least 2 units to the LHS 
while the RHS is $12|S|-24$. This proves Claim (i) after dividing.

Now consider the full graph $H$. 
By Observation~\ref{obs:Hbipartite} and Lemma~\ref{lem:no_back_edge} 
any vertex in $F_H\cup T_H$ has degree at least 3
in $H$, and they form an independent set. Claim (ii) now follows from
Lemma~\ref{lem:BW} as above.
\end{proof}

\begin{corollary}
If the minimum degree is 3, then $|M|\geq \frac{7}{50}(n+12)$. 
\end{corollary}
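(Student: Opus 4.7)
The plan is to convert the ingredients already established---Claim~\ref{cl:FC}, Claim~\ref{cl:TDelta}, and the two bounds of Lemma~\ref{lem:middle}---into the single linear inequality $50|M| \geq 7(n+12)$. First I would introduce auxiliary counts: let $s=|S|$, $t=|T_H|$, $u=|U|$. By Observation~\ref{obs:Hbipartite} every $S$-vertex is matched into $T_H\cup T_B$ and every $T_H$-vertex is matched into $S$, so exactly $t$ of the $s$ matching edges incident to $S$ end in $T_H$ and the remaining $s-t$ end in $T_B$. Hence $|T_B|\geq s-t$, and Claim~\ref{cl:TDelta} then gives $|M_B|\geq s-t$. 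Together with $|M_S|=s$, $|M_U|=u$, and $|M_C|\geq|F_C|$ (Claim~\ref{cl:FC}), this yields the matching lower bound
\[
|M| \;\geq\; |M_C|+|M_B|+|M_S|+|M_U| \;\geq\; |F_C| + (s-t) + s + u.
\]

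Next I would rephrase the target. The identity $n=|F_C|+|F_H|+2|M|$ turns $|M|\geq\tfrac{7}{50}(n+12)$ into $36|M|\geq 7|F_C|+7|F_H|+84$, and substituting the matching lower bound above reduces the goal to verifying
\[
29|F_C| + 72s - 36t + 36u \;\geq\; 7|F_H| + 84.
\]

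The final step is a case split on the sign of $t-6u$, which is precisely the threshold where the minimum of Lemma~\ref{lem:middle}(i) and (ii) switches. When $t\leq 6u$, bound~(i) supplies $|F_H|\leq 6s-12$, and substitution collapses the displayed inequality to $29|F_C|+30(s-t)+(36u-6t)\geq 0$, which is non-negative term-by-term under the case hypothesis. When $t>6u$, bound~(ii) supplies $|F_H|\leq 6s+6u-t-12$, and substitution collapses the inequality to $29|F_C|+30(s-t)+(t-6u)\geq 0$, again non-negative term-by-term.

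The main obstacle I anticipate is recognizing that the case split is unavoidable: neither (i) nor (ii) alone is strong enough, and no single convex combination suffices, because the tight example ($|F_C|=0$, $s=t$, $t=6u$, $|F_H|=36u-12$) lies exactly on the boundary where (i) and (ii) agree, so one bound must dominate on each side of the threshold. After the split, the remaining verification is routine arithmetic.
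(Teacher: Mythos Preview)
Your argument is correct, but it is more circuitous than necessary, and your closing claim that ``no single convex combination suffices'' is false. The paper's proof avoids the case split entirely by taking the fixed linear combination $(\mathrm{i}) + 6\cdot(\mathrm{ii})$, which yields $7|F_H|+6|T_H|\le 42|S|+36|U|-84$; adding $7\cdot$Claim~\ref{cl:FC} and $6\cdot$Claim~\ref{cl:TDelta} and using $|T_H|+|T_B|=|M_S|$ collapses everything to $7|F|\le 36|M|-84$, from which the bound is immediate. In your own framework this corresponds to the choice $\lambda=\tfrac17$ (equivalently $\mu:=7(1-\lambda)=6$): substituting $7|F_H|+84\le 42s+6(6u-t)$ into your target inequality leaves exactly $29|F_C|+30(s-t)\ge 0$, which holds without any hypothesis on the sign of $t-6u$. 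So the boundary example you identify is genuinely the tight case, but the conclusion you draw from it---that one must split---does not follow; rather, it pins down the unique coefficient that works uniformly. The paper's single-combination route is shorter and also generalizes cleanly to higher minimum degree (Theorem~\ref{thm:highdegree}), whereas your case analysis would need to be redone for each~$\delta$.
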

\begin{proof}
Adding Lemma~\ref{lem:middle}(ii) six times to Lemma~\ref{lem:middle}(i) gives
$$ 7|F_H| + 6|T_H| \leq 42|S|+36|U|-84 \leq  42|M_S| +36|M_U|-84.$$
Adding Claim~\ref{cl:FC} seven times and Claim \ref{cl:TDelta} six times gives
$$ 7|F_C|+ 7|F_H| + 6|T_B|+ 6|T_H| \leq 42|M_S| +36|M_U|+7|M_C|+6|M_B|-84.$$
Since $|S|=|M_S|=|T_H|+|T_B|$, this simplifies to
$$ 7|F| =7|F_H| + 7|F_C| \leq 36|M_S|+ 36|M_U|+7|M_C|+6|M_B|-84 \leq 36|M|-84.$$
Therefore
$ 2|M|=n-|F| \geq n+12-\frac{36}{7}|M| $
which gives the bound after rearranging.
\end{proof}

It is worth pointing out that this result (as well as Theorem~\ref{thm:highdegree} below)
does not use 1-planarity of the graph except when using the bound in 
Lemma~\ref{lem:BW}.  Hence, similar bounds could be proved for any graph class where
the size of independent sets can be upper-bounded relative to its minimum degree.

Doing the improvement from $\frac{7}{50}$ to $\frac{1}{7}$ will be done by improving the bound
on $|F_H|+|T_H|$ slightly.
By modifying $H$ and its 1-planar drawing and studying a resulting 1-planar bipartite graph $J$, 
we will show the following in Section~\ref{sec:appendix_hard}:

\begin{lemma}
\label{lem:hard}
\label{lem:uvertices}
%If there are no augmenting paths of length 9 or less, then
$|F_H| + |T_H| \leq 6|S| + 5|U| - 12.$
\end{lemma}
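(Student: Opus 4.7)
The plan is to construct an auxiliary bipartite 1-planar graph $J$ derived from the 1-planar drawing of $H$ and then to apply Lemma~\ref{lem:BW} to $J$. Set $A:=F_H\cup T_H$ and $B:=S\cup U$. From Lemma~\ref{lem:middle}(ii) we already know $|A|\le 6(|S|+|U|)-12$, and what remains is to save exactly one unit per $U$-vertex. Since a degree-$3$ vertex contributes $2$ to the left-hand side of Lemma~\ref{lem:BW}, the cleanest way to achieve this is to enlarge the independent set $A$ by $|U|$ new degree-$3$ vertices (one per $U$-vertex) while keeping the opposite side equal to $B$; then Lemma~\ref{lem:BW} applied to $J$ gives $2(|A|+|U|)\le 12(|S|+|U|)-24$, which rearranges to the desired bound.

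First I would remove from the inherited 1-planar drawing of $H$ every edge whose both endpoints lie in $B$. By Observation~\ref{obs:Hbipartite} and Lemma~\ref{lem:no_back_edge} this affects no $A$-vertex, so the resulting bipartite 1-planar graph $J_0$ still has minimum degree at least $3$ on the $A$-side and preserves all $A$-incidences. Next, for each $u\in U$, let $m(u)\in D_4$ denote its matching partner (present in the 1-planar drawing of $G$ but absent from $H$). I would locally modify the drawing of $J_0$ near $u$ and $m(u)$ so as to insert a fresh vertex $a_u$ adjacent to three distinct $B$-vertices, placed in a region not occupied by any $A$-vertex and drawn so that each new edge has at most one crossing and so that $a_u$ is not made adjacent to any $A$-vertex. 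The vertex $m(u)$ contributes a convenient nearby empty region for $a_u$, and the potential kite-edges at any crossings incident to $u$ (Section~\ref{sec:background}) supply the geometric flexibility to route the three new edges 1-planarly.

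Once the construction is carried out, the set $A':=A\cup\{a_u : u\in U\}$ is independent in the bipartite 1-planar graph $J$ on vertex set $A'\cup B$, has size $|A|+|U|$, and every vertex of $A'$ has degree at least $3$. Applying Lemma~\ref{lem:BW} to $J$ with independent set $A'$ and using $|V(J)\setminus A'|=|B|=|S|+|U|$ yields $|A|+|U|\le 6(|S|+|U|)-12$, i.e., $|F_H|+|T_H|\le 6|S|+5|U|-12$.

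The main obstacle is the local surgery in the second step. For each $u\in U$ one must identify three suitable $B$-vertices near $u$ that can simultaneously host the three new edges to $a_u$ without creating crossings that violate 1-planarity; this will require a case analysis depending on the cyclic order of edges around $u$ in the drawing and on which of those edges are crossed, and in the harder cases one must rewrite crossings via potential kite-edges to open up the needed incidences. A secondary technicality is to argue that the surgeries for different $U$-vertices can be carried out simultaneously without mutual interference, especially when two $U$-vertices or their matching partners lie close to each other in the drawing; handling this cleanly is likely to consume most of Section~\ref{sec:appendix_hard}.
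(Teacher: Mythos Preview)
Your high-level plan---modify $H$ into an auxiliary 1-planar bipartite graph $J$ and then invoke Lemma~\ref{lem:BW}---is exactly what the paper announces, and the arithmetic is correct: enlarging the independent side by $|U|$ new degree-$3$ vertices while keeping the other side equal to $S\cup U$ would give $|F_H|+|T_H|+|U|\le 6(|S|+|U|)-12$ and hence the lemma.

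The genuine gap is the surgery, and it is not a detail one can defer. For $u\in U$ you place $a_u$ near the matching partner $m(u)\in D_4$ and must connect it $1$-planarly to three \emph{distinct} vertices of $B=S\cup U$. From that location the only $B$-vertex you are guaranteed to reach is $u$ itself: the $H$-neighbours of $u$ that survive in $J_0$ all lie in $T_H\subseteq A$ (you deleted the $B$-$B$ edges at $u$), and the $G$-neighbours of $m(u)$ lie in $D_3\cup D_4\cup D_5$ with no control over how many, if any, fall into $U$. Potential kite-edges at crossings on edges incident to $u$ again lead to $T_H$-vertices or to vertices outside $H$, not to $B$. So the case analysis you postpone is not ``which of several routings to choose'' but ``where do two further $B$-endpoints come from at all''; nothing in the setup guarantees they exist, and the proposal offers no fallback when they do not. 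The non-interference of surgeries at different $U$-vertices, which you flag as secondary, is in fact moot until this primary existence question is settled.

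The paper's mechanism also differs. It too builds $J$, but rather than manufacturing extra $A$-vertices it couples the independent-set bound with a separate structural inequality on the $T$-side (referenced in the text as Observation~\ref{obs:Tsmall}(i)), and the paper explicitly remarks that this extra inequality is what prevents the argument from extending to minimum degree $\delta>3$. Your surgery, by contrast, is a purely geometric construction oblivious to $\delta$; if it worked, it would equally improve Lemma~\ref{lem:middle4}(ii) for every $\delta\ge 4$. That discrepancy is a further signal that the two routes are not the same and that the surgery cannot go through in the generality you assume.
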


This then gives our main result:

\begin{theorem}
\label{thm:main}
Let $G$ be a 1-planar graph with minimum degree 3, and let $M$ be a matching in $G$ that
has no augmenting path of length 9 or less.  Then $|M|\geq \frac{n+12}{7}$.
\end{theorem}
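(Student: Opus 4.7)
The plan is that once Lemma~\ref{lem:hard} is available, the theorem reduces to a short linear combination of it with Claims~\ref{cl:FC} and~\ref{cl:TDelta}, in the same style as the corollary's proof of the $\tfrac{7}{50}$-bound but with coefficients tuned so that no slack remains. My goal is to derive $|F|\leq 5|M|-12$ and then combine this with $|F|=n-2|M|$ to conclude.

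First I would rewrite Lemma~\ref{lem:hard} in terms of matching-edges, using $|S|=|M_S|$ and $|U|=|M_U|$, to obtain $|F_H|+|T_H|\leq 6|M_S|+5|M_U|-12$. The key combinatorial identity I would invoke next is
\[
|T_H|+|T_B| \;=\; |S| \;=\; |M_S|,
\]
which follows from Observation~\ref{obs:Hbipartite}: the matching pairs every $S$-vertex with a vertex of $T_H\cup T_B$, and conversely every vertex of $T_H\cup T_B$ is matched to some vertex of $S$. Adding Claim~\ref{cl:TDelta} to the rewritten lemma and substituting this identity yields $|F_H|\leq 5|M_S|+5|M_U|+|M_B|-12$. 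Adding Claim~\ref{cl:FC} on top gives
\[
|F| \;=\; |F_C|+|F_H| \;\leq\; |M_C|+5|M_S|+5|M_U|+|M_B|-12 \;\leq\; 5|M|-12,
\]
where the last inequality uses that $M_C,M_B,M_S,M_U$ are pairwise disjoint subsets of $M$ and that every coefficient on the right is at most $5$. Substituting $|F|=n-2|M|$ and rearranging produces $7|M|\geq n+12$, which is the statement of the theorem.

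The heavy lifting sits entirely inside Lemma~\ref{lem:hard}; once it is in hand, deducing Theorem~\ref{thm:main} is a few lines of arithmetic. What is worth noting is how little room for slack there is: the combination is arranged so that each $|M_X|$ term contributes at worst $5$ per matching-edge, and this is possible only because Lemma~\ref{lem:hard} replaces the coefficient $6|U|$ of the weaker Lemma~\ref{lem:middle}(ii) by $5|U|$. If one attempted the same direct substitution using Lemma~\ref{lem:middle}(ii) alone, the right-hand side would become $6|M|-12$ and one would only recover $|M|\geq (n+12)/8$. Thus the main obstacle for the theorem is really the proof of Lemma~\ref{lem:hard} deferred to Section~\ref{sec:appendix_hard}, not the deduction of Theorem~\ref{thm:main} from it.
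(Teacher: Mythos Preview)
Your proof is correct and follows essentially the same route as the paper's own proof: combine Lemma~\ref{lem:hard} (rewritten via $|S|=|M_S|$, $|U|=|M_U|$) with Claims~\ref{cl:FC} and~\ref{cl:TDelta}, use the identity $|T_H|+|T_B|=|M_S|$, and bound the resulting linear combination by $5|M|-12$ using disjointness of $M_C,M_B,M_S,M_U$. The paper carries out the same additions in a slightly different order (summing all three inequalities at once rather than stepwise), but the arithmetic and the key identity are identical.
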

\begin{proof}
Using $|S|=|M_S|$ and  $|U| = |M_U|$ we have
\begin{align*}
	|F_H|+|T_H| & \leq 6 |M_S| + 5 |M_U|-12 &&\text{from Lemma~\ref{lem:uvertices}} \\
	|F_C|  & \leq  |M_C| &&\text{from Claim~\ref{cl:FC}} \\
	|T_B|  & \leq  |M_B| &&\text{from Claim~\ref{cl:TDelta}.} 
\end{align*}
Since $|T_H|+|T_B|=|M_S|$ this gives
$
|F|+|M_S| \leq |M_C| + |M_B| + 6 |M_S| + 5 |M_U|-12,
$
therefore $|F|\leq 5|M|-12$.  This implies 
$2|M| = n-|F| \geq n-5|M|+12$ or $7|M|\geq n+12$.
\end{proof}

\paragraph{Higher minimum degree}

Since the bound for independent sets in 1-planar graphs gets 
smaller when the minimum 
degree is larger, we can prove better matching-bounds for higher minimum
degree.

\begin{lemma}
\label{lem:middle4}
If the minimum degree is $\delta>3$, then 
$$(i)\; |F_H|\leq \tfrac{4}{\delta-2}(|S|-2) \quad \text{and} \quad (ii)\; |F_H|+|T_H|\leq \tfrac{4}{\delta-2}(|S|+|U|-2).$$
\end{lemma}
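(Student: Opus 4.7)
The plan is to generalize the argument of Lemma~\ref{lem:middle} in essentially the same way, exploiting the fact that the coefficient $(3d-6)$ on the left-hand side of Lemma~\ref{lem:BW} is monotonically non-decreasing in $d$, so larger minimum degree yields a stronger inequality.

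For part (i), I would consider the subgraph of $H$ induced by $F_H\cup S$ and apply Lemma~\ref{lem:BW} with $A = F_H$. Since $G$ is $1$-planar, this induced subgraph is also $1$-planar. By Observation~\ref{obs:Hbipartite}, $F_H$ is independent and every $F_H$-neighbour lies in $S$ or outside $H$; by Lemma~\ref{lem:no_back_edge} the latter case cannot occur. Hence every $f\in F_H$ has the same neighbourhood in this subgraph as it does in $G$, so its degree is at least $\delta$. Since $(3d-6)$ is non-decreasing, each $f\in F_H$ contributes at least $3\delta-6 = 3(\delta-2)$ to the LHS of Lemma~\ref{lem:BW}, so
$$3(\delta-2)\,|F_H| \;\leq\; 12|S|-24,$$
which after dividing by $3(\delta-2)$ yields $|F_H|\leq \tfrac{4}{\delta-2}(|S|-2)$.

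For part (ii), I would apply Lemma~\ref{lem:BW} to all of $H$ with $A = F_H\cup T_H$ and $V\setminus A = S\cup U$. Observation~\ref{obs:Hbipartite} guarantees that $F_H\cup T_H$ is independent in $H$ (no edges within $F_H$, none within $T_H$, and no $F_HT_H$-edges since these would violate distances in the BFS layering). Again by Lemma~\ref{lem:no_back_edge}, every vertex of $F_H\cup T_H$ keeps its full $G$-neighbourhood inside $H$, so it has degree at least $\delta$ in $H$. The same monotonicity argument then gives
$$3(\delta-2)\bigl(|F_H|+|T_H|\bigr) \;\leq\; 12(|S|+|U|)-24,$$
and dividing by $3(\delta-2)$ yields the claimed bound $|F_H|+|T_H|\leq \tfrac{4}{\delta-2}(|S|+|U|-2)$.

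There is no real obstacle: the entire argument reduces to noting that Lemma~\ref{lem:BW} is tighter under higher minimum degree, combined with Lemma~\ref{lem:no_back_edge} to ensure the degree in the relevant (sub)graph of $H$ equals the degree in $G$. The only mildly subtle point is to confirm that induced subgraphs of a $1$-planar graph are $1$-planar (immediate from any $1$-planar drawing of $G$ by deleting the removed vertices and their incident edges) so that Lemma~\ref{lem:BW} genuinely applies to the subgraph used in part (i).
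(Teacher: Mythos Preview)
Your proposal is correct and follows essentially the same approach as the paper: apply Lemma~\ref{lem:BW} to the subgraph induced by $F_H\cup S$ (for~(i)) and to all of $H$ (for~(ii)), using Observation~\ref{obs:Hbipartite} and Lemma~\ref{lem:no_back_edge} to guarantee that every vertex of $A$ has degree at least~$\delta$ and hence contributes at least $3\delta-6$ to the left-hand side. Your write-up is in fact more explicit than the paper's (you spell out the monotonicity of $3d-6$, the independence of $F_H\cup T_H$, and the preservation of 1-planarity under taking subgraphs), but the underlying argument is identical.
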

\begin{proof}
As in Lemma~\ref{lem:middle}, consider the subgraph of $H$ induced by $F_H$
and $S$.  Any $f\in F_H$ has degree $\delta$ or more and contributes
at least $3\delta-6$ units to the LHS of the inequality in Lemma~\ref{lem:BW}.
The RHS is $12|S|-24$. This proves Claim (i) after dividing.  Claim (ii) is proved the same way using the full graph $H$.
\end{proof}

\begin{theorem}
\label{thm:highdegree}
Let $G$ be a 1-planar graph with minimum degree $\delta$.
Let $M$ be any matching in $G$ without 9-augmenting path.  Then 
\begin{itemize}
\item $|M|\geq \frac{3}{10}(n+12)$ for $\delta=4$,
\item $|M|\geq \frac{1}{3}(n+12)$ for $\delta \geq 5$.
\end{itemize}
\end{theorem}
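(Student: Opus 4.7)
The plan is to rerun the final calculation of Theorem~\ref{thm:main} verbatim, but with the sharper estimates of Lemma~\ref{lem:middle4} substituted for Lemma~\ref{lem:uvertices}. Concretely, for each value of $\delta$ I would pick positive weights $\alpha,\beta$, form $\alpha$ copies of Lemma~\ref{lem:middle4}(i) together with $\beta$ copies of Lemma~\ref{lem:middle4}(ii) to obtain a single inequality bounding $(\alpha+\beta)|F_H|+\beta|T_H|$, and then add $\alpha+\beta$ copies of Claim~\ref{cl:FC} and $\beta$ copies of Claim~\ref{cl:TDelta}, exactly mirroring the combination used at the end of Theorem~\ref{thm:main}. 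Applying the identities $|T_H|+|T_B|=|M_S|$, $|S|=|M_S|$, and $|U|=|M_U|$ then collapses everything to an inequality of the shape $c\cdot|F|\leq C\cdot|M|-K$, whose constants depend only on $\alpha,\beta,\delta$. Substituting $|F|=n-2|M|$ and rearranging yields the desired lower bound on $|M|$.

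The only point requiring thought is the choice of weights. After the manipulations above, the right-hand side has four matching-edge coefficients (one for each of $|M_C|,|M_B|,|M_S|,|M_U|$), and the slope of the resulting bound on $|M|$ is governed by the largest of them; so the right strategy is to pick $\alpha,\beta$ that balance these four coefficients. For $\delta=4$, where Lemma~\ref{lem:middle4} reads $|F_H|\leq 2|S|-4$ and $|F_H|+|T_H|\leq 2|S|+2|U|-4$, the balanced choice $\alpha=1,\,\beta=2$ produces the ratio $\tfrac{3}{10}$ in the final bound. For $\delta\geq 5$ the factor $\tfrac{4}{\delta-2}$ is at most $\tfrac{4}{3}$, and an analogous balancing of (i) and (ii) gives the ratio $\tfrac{1}{3}$.

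The main obstacle is just the weight-optimization and the bookkeeping needed to track the additive constants through the rearrangement; there is no new combinatorial or geometric input, and in particular 1-planarity is used only through Lemma~\ref{lem:BW}, which is already absorbed into Lemma~\ref{lem:middle4}. This is also why the bounds are (as the authors remark) not expected to be tight: the independent-set inequality of Lemma~\ref{lem:BW} is used as a black box for each of the sets $F_H$ and $F_H\cup T_H$, without exploiting interactions between them or the extra structure that higher minimum degree imposes on the 1-planar drawing.
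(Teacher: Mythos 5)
Your proposal is correct and follows essentially the same route as the paper's own proof: the paper sets $c=\tfrac{4}{\delta-2}$, takes Lemma~\ref{lem:middle4}(i) once and (ii) $c$ times (your $\alpha=1$, $\beta=c$, hence $\beta=2$ for $\delta=4$), adds Claim~\ref{cl:FC} with weight $c+1$ and Claim~\ref{cl:TDelta} with weight $c$, and then bounds the right-hand side by the largest of the four matching-edge coefficients ($c^2$ on $M_S,M_U$ versus $c+1$ on $M_C$ and $c$ on $M_B$), which is exactly your balancing argument. One caveat you share with the paper's own write-up: honest bookkeeping of the additive constants from Lemma~\ref{lem:middle4} (each application contributes $-2c$, whereas the paper's proof substitutes $c(|S|-12)$) yields $\tfrac{3}{10}(n+4)$ for $\delta=4$ rather than $\tfrac{3}{10}(n+12)$, so the additive constant in the stated theorem does not follow from either version of the computation without further argument.
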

\begin{proof}
Set $c=\tfrac{4}{\delta-2}$, so $|F_H|\leq c(|S|-12)$ and $|F_H|+|T_H|\leq 
c(|S|+|U|-12).$  Taking the former inequality once and adding the latter
one $c$ times gives
$$ (c+1)|F_H| + c|T_H| \leq (c^2+c) |S| + c^2 |U| - (c+1)12
= (c^2+c)|M_S| + c^2|M_U| - (c+1)12.$$
Adding Claim~\ref{cl:FC} $c+1$ times and Claim~\ref{cl:TDelta} $c$ times gives
\begin{equation}
\label{eq:highdeg}
 (c{+}1)(|F_C|{+}|F_H|) + c(|T_B|{+}|T_H|) 
	\leq (c^2{+}c) |M_S| + c^2 |M_U| + (c{+}1)|M_C| + c|M_B|- (c{+}1)12.
\end{equation}
For $\delta=4$ we have $c=2$, and with $|T_B|+|T_H|= |M_S|$ 
Equation~\ref{eq:highdeg} simplifies to 
$$ 3|F| \leq 4|M_S|+ 4|M_U|+3|M_C|+2|M_B|-36 \leq 4|M|-36.$$
Therefore $ 2|M|=n-|F| \geq n+12-\frac{4}{3}|M|$.
For $\delta\geq 5$ we have $c^2<c+1$ and so can only simplify Equation~\ref{eq:highdeg} to
$$ (c+1)(|F_C|+|F_H|) \leq (c+1)|M| - (c+1)12 $$
hence $2|M|=n-|F|\geq n+12 - |M|$.
The bounds follow after rearranging. 
\end{proof}

For $\delta=4,5$ these are close to the bounds
of $\frac{1}{3}(n+4)$ (for $\delta=4$) 
and $\frac{1}{5}(2n+3)$ (for $\delta=5$) 
that we know to be the tight lower bounds on the maximum matching size \cite{BW19}.
Unfortunately we do not know how improve Theorem~\ref{thm:highdegree} for $\delta>3$; the techniques of
Section~\ref{sec:appendix_hard} do not work for higher minimum degree since we will use
another inequality (Observation~\ref{obs:Tsmall}(i)) that is not strong enough
to achieve the bound for higher degrees, and not easily improved.

The case $\delta\geq 6$ is also interesting.  Here one would hope for even
larger matching-bounds.  Unfortunately, 
the bottleneck in our analysis is our treatment of flowers of length 3.
Here we remove one free vertex and one matching-edge, which can at best lead to
a bound of $|M|\geq \frac{1}{3}(n+O(1))$.   So a further improvement of the
bound for minimum degree $ \delta\geq 6 $ would require treating short flowers differently.

\subsection{Stopping earlier?}

Currently we remove all augmenting paths up to length 9 before returning the
matching.  Naturally one wonders whether one could stop earlier?

It is possible to show that it would suffice to remove only 7-augmenting
paths.  Inspecting the analysis, one sees that the absence of augmenting paths 
of length exactly 9 is used only once: 
In the proof of Lemma~\ref{lem:no_back_edge}, we use it to argue that a vertex 
$t\in T_H$ is not adjacent to a 7-cycle flower.  Digging further,
one can verify that 7-cycle-flowers need to be removed only to avoid
matching-edges within $U$.  It turns out that one can deal with matching-edges
within $U$ directly, by arguing that at most three vertices in $T_H$ can
have an endpoint at such an edge (else there is a 7-augmenting path), and 
removing these vertices and matching-edges and accounting for them directly.  
The details are not difficult but tedious and require even more notation;
we will not give them.

On the other hand, it is not enough to remove only 3-augmenting paths.
Figure~\ref{fig:counterthreeaug} shows an example of a matching in a 1-planar
graph that has no 3-augmenting paths, but only size $\frac{n+12}{8}$.
We can show that this is as bad as it can get.

\begin{figure}
	\centering
	\includegraphics{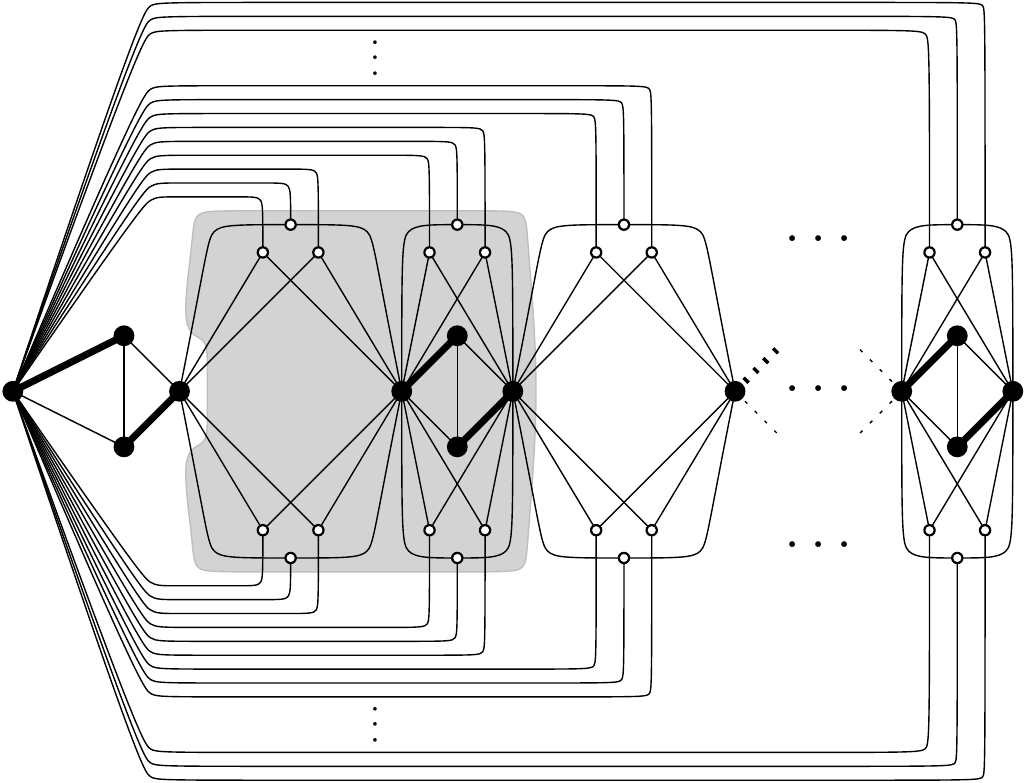}
	\caption{A graph with a matching marked in thick edges of size $ \frac{n+12}{8} $. No 3-augmenting path exists for the chosen matching, but there are 5-augmenting paths. The gray area marks an example of 16 vertices such that only 2 matching edges exist. Repeating this configuration gives the example for arbitrary $ n $.}
	\label{fig:counterthreeaug}
\end{figure}

\begin{theorem}
\label{thm:easy}
Let $G$ be a 1-planar graph with minimum degree 3 and let $M$ be a 
matching without 3-augmenting paths.  Then $|M|\geq \frac{n+12}{8}$.
\end{theorem}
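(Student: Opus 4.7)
The plan is to reuse the overall template from Section~\ref{sec:analysis} but with only the simplest short-flower removal, since the hypothesis now excludes only 3-augmenting paths. I would partition the matching edges into $M_0$ (neither endpoint adjacent to a free vertex), $M_1$ (exactly one endpoint adjacent to a free vertex), and $M_2$ (both endpoints adjacent to a free vertex). The key structural step is that every edge $(v,w)\in M_2$ has a single \emph{owner}, a free vertex $f$ that is the unique common free neighbour of both $v$ and $w$: if $f_v$ is any free neighbour of $v$ and $f_w$ any free neighbour of $w$, then $f_v\neq f_w$ yields the 3-augmenting path $f_v$-$v$-$w$-$f_w$, so $v$ and $w$ share a single free neighbour $f$. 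Equivalently, $M_2$ is exactly the set of matching edges lying on a 3-cycle-flower. Let $F_C'\subseteq F$ be the set of owners.

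First I would prove the analogue of Claim~\ref{cl:cycleflower}: $|F_C'|\leq |M_2|$. Assign each $f\in F_C'$ to a matching edge it owns; if two distinct $f,f'\in F_C'$ are assigned to the same $(v,w)\in M_2$, then both are adjacent to both $v$ and $w$, yielding the 3-augmenting path $f$-$v$-$w$-$f'$. Next I would verify that $F_H:=F\setminus F_C'$ has no neighbour in $V(M_2)$ at all: if $f\in F_H$ were adjacent to some $v\in V(M_2)$ with $(v,w)\in M_2$ of owner $f^*$, then $f^*\neq f$ and $f^*\sim w$ give the 3-augmenting path $f$-$v$-$w$-$f^*$. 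Consequently, every $f\in F_H$ keeps its full $G$-degree inside the bipartite 1-planar subgraph $J:=G[F_H\cup S_1]$, where $S_1$ denotes the free-adjacent endpoints of $M_1$-edges (so $|S_1|=|M_1|$). Applying Lemma~\ref{lem:BW} to the independent set $F_H$ in $J$, which has minimum $J$-degree at least 3, then yields (as in Lemma~\ref{lem:middle}(i))
$$|F_H|\leq 6|S_1|-12 = 6|M_1|-12.$$

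Combining with $|F_C'|\leq |M_2|$ and $|M|=|M_0|+|M_1|+|M_2|$ gives
$$|F|=|F_H|+|F_C'|\leq 6|M_1|+|M_2|-12\leq 6|M|-12,$$
using the trivial $6|M_0|+5|M_2|\geq 0$. Since $n=2|M|+|F|$, this rearranges to $8|M|\geq n+12$. The main technical step is the shared-owner observation for $M_2$-edges; once it is in place everything is bookkeeping mirroring Claim~\ref{cl:cycleflower} and Lemma~\ref{lem:middle}(i). As in the proof of Theorem~\ref{thm:main}, some care is needed in degenerate situations where $F_H=\emptyset$ and Lemma~\ref{lem:BW} does not apply directly; but then $|F|\leq |F_C'|\leq |M_2|\leq |M|$ already suffices whenever $|M|$ is moderately large, and the remaining finitely many tiny cases can be checked by hand.
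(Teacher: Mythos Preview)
Your proof is correct and follows essentially the same route as the paper: your $M_2$, $F_C'$, $F_H$, and $S_1$ coincide with the paper's $M_c$, $F_c$, $F_o$, and $S$, and both arguments hinge on the single-owner observation for $M_2$-edges followed by an application of Lemma~\ref{lem:BW} to the remaining free vertices. Your version is in fact slightly cleaner in that you restrict the auxiliary graph to $F_H\cup S_1$ (rather than $F\cup S$), which makes the minimum-degree hypothesis of Lemma~\ref{lem:BW} transparently satisfied, and you explicitly flag the degenerate case $F_H=\emptyset$.
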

\begin{proof}
The proof is very similar to the one of Theorem 2 in \cite{frankeComputingLarge2011}
except that we use Lemma~\ref{lem:BW} rather than the edge-bound
for planar bipartite graphs.  We repeat it here for completeness,
mimicking their notation.
Let $M_c$ be all those matching-edges $(x,y)$ for which some free vertex 
$f\in F$ is adjacent to both $x$ and $y$, and let $F_c$ be all such free vertices. Vertex $f$ is necessarily the only $F$-neighbour of $x$ and $y$, else there would
be a 3-augmenting path.  Hence $|F_c|\leq |M_c|$.  

Let $M_o$ and $F_o$ be the remaining matching-edges and free vertices.
For each edge $(x,y)$ in $M_o$, at most one of the ends can have $F$-neighbours,
else $(x,y)$ would be in $M_c$ or there would be a 3-augmenting path.
Let $S$ be the ends of edges in $M_o$ that have $F$-neighbours, and
let $G'$ be the auxiliary graph induced by $F$ and $S$.  Since $F$
is an independent set, we have $|F_o|\leq 6|S|-12 \leq 6|M_o|-12$
by Lemma~\ref{lem:BW}.

Putting both together, $2|M|=n-|F|\geq n+12-|M_c|-6|M_o|\geq n+12-6|M|$
and the bound follows after rearranging.
\end{proof}

%%%%%%%%%%%%%%%%%%%% \section{Proof of Lemma~\ref{lem:hard}}

\section{Proof of Lemma~\ref{lem:hard}}
\label{sec:appendix_hard}

In this section, we prove Lemma~\ref{lem:hard}, i.e., we show that
$$|F_H| + |T_H| \leq 6|S| + 5 |U| - 12.$$
\iffull
The following proof does not quite work, but puts us in the right direction.
Consider the graph $H$ defined earlier.
For any $t\in T_H$ that has $U$-neighbours, contract $t$ into one of its $U$-neighbours.
In the resulting graph all $F_H$-vertices and the remaining $T_H$-vertices have three $S$-neighbours.
If a $U$-vertex had $d$ $T_H$-vertices contracted into it, then it now has degree at least $d$
(because the matching-partners of the $T_H$-vertices are distinct).
Also, all vertices not in $S$ form an independent set.  If the resulting graph is 1-planar, we could
hence use Lemma~\ref{lem:BW} to bound $|F_H|+|T_H|$ as sufficiently small.

Unfortunately, 1-planar graphs are not closed under contraction in general;
we can only contract along uncrossed edges.  So we need to proceed more carefully,
the following is an outline.
We first assign $T_H$-vertices to
a carefully chosen $U$-neighbour ($T_H$-verties that have no $U$-neighbours
will not be contracted).  Next eliminate $U$-vertices with few assigned
$T_H$-vertices; these can be accounted for easily.  In the
resulting drawing $I$, we contract each remaining $T_H$-vertex
that has $U$-neighbours along an uncrossed edges.
Unfortunately we cannot always do a contraction along
the matching-edge or the assignment-edge, which will make it more difficult to argue
that a vertex in $U$ retain sufficiently high degrees.
Letting $J$ be the resulting final drawing,
we then apply Lemma~\ref{lem:BW} (in a stronger form) to the bipartite
graph $J$ to prove Lemma~\ref{lem:hard}.  
\else
We already gave a sketch earlier, but repeat the details of these
steps here
because some subtle tiebreakers are needed if we are to deal
correctly with ``remaining'' vertices of $T_H$.
\fi
We phrase the various procedures below
as if they were algorithms, but remind the reader that they are only
used for the analysis and not needed for finding the matching.

\paragraph{The super-graph $H^+$.}

Fix an arbitrary 1-planar drawing of $H$;    from now on we use
$H$ (as well as the graphs $H^+,I,J,J^-$ derived from it) to mean both
the graph and the 1-planar drawing that comes with it, and which will
not be changed unless stated explicitly.

We obtain a 1-planar drawing $H^+$ by changing $H$ in two ways.
First, delete all edges of $H$ that are within $S\cup U$.
Second, add/re-route kite-edge as follows.
Let $(t,x)$ 
by a potential kite-edge of some crossing $c$ of $H$ with $t\in T_H$
and $x\in U\cup S$.
If $(t,x)$ does not yet exist in $H$, then add it.
If $(t,x)$ exists already in $H$ and is crossed, then re-route $(t,x)$ 
as the kite-edge so that it becomes uncrossed.  If $(t,x)$ exists as
uncrossed edge in $H$ already, then do not insert it (so that $H^+$
stays simple).  However, if crossing $c$ involves the matching-edge at $t$,
then re-route $(t,x)$ (if needed) to be at crossing $c$ rather than elsewhere.
Repeat for all such potential kite-edges $(t,x)$.
See also Figure~\ref{fig:example2}.

\begin{figure}[ht]
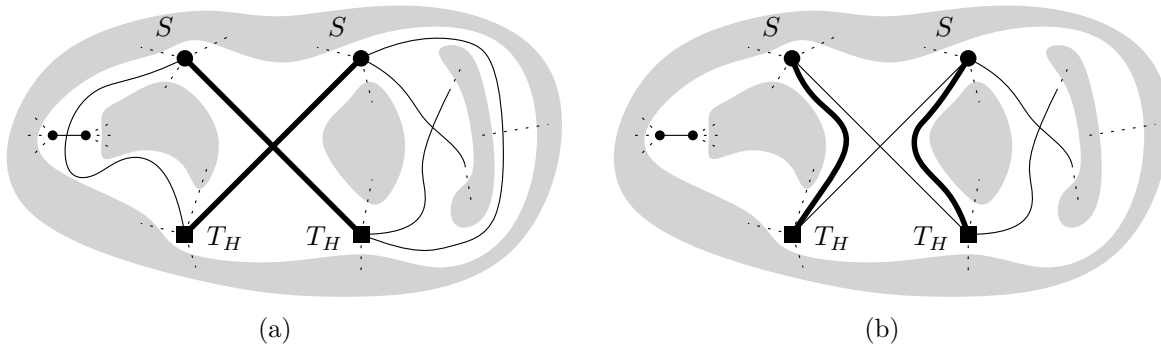

\hspace*{\fill}
\subfigure[~]{\includegraphics[page=12,width=0.45\linewidth]{therese.pdf}}
\hspace*{\fill}
\subfigure[~]{\includegraphics[page=13,width=0.45\linewidth]{therese.pdf}}
\hspace*{\fill}
\caption{Re-routing kite-edges that are crossed or not routed near
the matching edge, and trading matching-partners. The gray areas symbolize other parts of the graph. Dotted edges indicate possible connections.}
\label{fig:example2}
\end{figure}

We also {\em trade matching-partners} as follows.  Assume that
$H^+$ contains two matching edges $(s,t)$ and $(s',t')$ 
(with $s,s'\in S$ and $t,t'\in T_H$) that cross each other.
We inserted (or re-drew)
kite-edges $(s,t')$ and $(s',t)$ at this crossing.
We now remove edges $(s,t)$ and $(s',t')$ from the matching
and declare $(s,t')$ and $(s',t)$ to be matching-edges instead, 
noting that these are uncrossed.
This exchange would not necessarily have been possible in $G$
where these kite-edges need not exist, but we do it here only
for the purpose of explaining how to transform $T_H$-vertices
and do not actually change the returned matching.
From now on, ``matching-edge''
and ``matching-partner'' refers to the status after this trading.
Note that all vertices in $T_H$ remain matched.

We summarize the properties of $H^+$ for future reference:

\begin{observation}
\label{obs:Hplus}
$H^+$ is a simple graph with a fixed 1-planar drawing that satisfies the following.
\begin{enumerate}
\item[(a)] Any edge has exactly one endpoint in $S\cup U$.
\item[(b)] Vertices in $F_H$ have degree 3 or more and all their neighbours are in $S$.
	Vertices in $T_H$ have degree 3 or more and all their neighbours are in $S\cup U$.
\item[(c)] If $(x,t)$ is a potential kite-edge at crossing $c$ for some 
$x\in S\cup U$ and $t\in T_H$, 
then $(x,t)$ exists as an uncrossed edge in $H^+$.  If $c$ involves the matching-edge
at $t$, then $(x,t)$ is routed at crossing $c$.
\item[(d)] No two crossing matching-edges cross each other.
\end{enumerate}
\end{observation}
\begin{proof}
To show (a), observe that in $H$ every edge has at least one endpoint in $S\cup U$
by Observation~\ref{obs:Hbipartite}.  Added kite-edges also have an endpoint in $S\cup U$,
There is exactly one such endpoint since we deleted edges within $S\cup U$.
(b) held in $H$ by Observation~\ref{obs:Hbipartite} and also holds for added kite-edges.
(c) and (d) hold by construction of $H^+$
and since we traded matching-partners.  
%It remains to show (e).  Fix some $y\in T_H$
%and let $x\in S\cup U$ be an arbitrary neighbour. 
%If $(y,x)$ is uncrossed then we are done.  If $(y,x)$ is crossed, say by edge $(a,b)$,
%then up to renaming $a\in S\cup U$ by (a).  Therefore $(y,a)$ is a potential kite-edge
%which exists as uncrossed edge in $H^+$ by (c).
\end{proof}

\paragraph{Assignment to $U$-vertices.}
We assign vertices in $T_H$ to $U$-neighbours with the following {\sc AssignmentAlgorithm}.
For each $t\in T_H$:
\begin{itemize}
\item If $t$ has an uncrossed edge to a $U$-neighbour:
	\begin{itemize}
	\item Add $t$ to a set $T_\mu$.    (As we will see,
		vertices in $T_\mu$ use their matching-edges during the transformation
	explained later, hence the ``$\mu$''.)
	\item If the matching-edge $(t,s)$ is crossed by an edge $(y,u)$
		with $u\in U$, then assign $t$ to this $u$ (noting that
		edge $(t,u)$ exists as uncrossed kite-edge at this crossing).
	\item Otherwise assign 	$t$ to an arbitrary $U$-neighbour $u$ for which
		edge $(t,u)$ is uncrossed.
	\end{itemize}
\item Otherwise, if $t$ has at least three $S$-neighbours, add $t$ to $T_\sigma$.
	(The ``$\sigma$'' reminds that there are sufficiently
	many $S$-neighbours that we do not need to transform these vertices.)
%\item Otherwise, if the matching-edge $(s,t)$ is uncrossed, then add $t$ to $T_\mu$ and assign $t$
%	to an arbitrary $U$-neighbour.  (This exists since
%	$\deg(t)\geq 3$, but it has at most two $S$-neighbours and no $F_H$-neighbours.)
\item Otherwise, all edges to $U$-neighbours are crossed.
	Add $t$ to a set $T_\rho$ (``$\rho$'' stands for ``remaining'').  
%	We explain this in detail in Lemma~\ref{lem:Tpsi}
%	(see also Figure~\ref{fig:Tpsi}(a)), but
%	need to anticipate here a few things to be able to specify the $U$-neighbour that we
%	assign $t$ to.
%
%	Let $(x,y)$ be the edge that crosses $(s,t)$, say at crossing $c_s$; up to renaming
%	$x\in S\cup U$ by Lemma~\ref{obs:Hplus}(a).  The potential
%	kite-edge $(x,t)$ of $c_s$ exists as uncrossed edge in $H^+$, which shows that $x\not\in U$
%	since the first case did not apply.  So $x\in S$.  Also crossing $c_s$ involves matching-edge
%	$(s,t)$, so $(x,t)$ is routed at $c$ and consecutive with $(s,t)$ in the cyclic order at $t$.
%	Let $(u,t)$ be the edge such that the cyclic order
%	at $t$ contains $\langle (x,t),(s,t),(u,t)\rangle$ as consecutive elements.  If $u\in S$
%	then $t$ would have three $S$-neighbours $x,s,u$ and be in $T_\sigma$.  So $u\not\in S$
%	which implies $u\in U$ since all neighbours of $t$ are in $S\cup U$.
%	Assign $t$ to $u$.
	Assign $t$ to an arbitrary $U$-neighbour.
\end{itemize}

If $t\in T_H$ has been assigned to $u\in U$, then
we call $(t,u)$ the {\em assignment-edge} of $t$.
Let $U^d$ be the set of all those vertices $u\in U$ that had $d$ incident
assignment-edges. 
It will be helpful to assign as many $T$-vertices as possible to 
$\bigcup_{d\leq {5}} U^d$;
we therefore do the following {\em re-assignment}:  If $t\in T_H$ is
assigned to a vertex $u\not\in \bigcup_{d\leq {5}} U^d$,
but $t$ has a neighbour $u'\in U^0\cup \dots \cup U^{4}$,
then re-assign $t$ to $u'$.  Note that this changes the sets $U^d$, but it
can only increase $\bigcup_{d\leq {5}} U^d$ since we never assign another
vertex of $T_H$ to a vertex in $U^5$.
Repeat (with the new sets $U^0,\dots,U^{5}$) until no more re-assignments are possible.

Let $T^d$ be the vertices in $T_H\setminus T_\sigma$ that are assigned to
a vertex in $U^d$.
So $|T^d|=d|U^d|$, and with our choice of assignment-edge, we have
the following trivial (but crucial) properties:

\begin{observation}
\label{obs:Tsmall}
\begin{enumerate}
\item[(i)] $|T_0|=0$ and $\sum_{d=1}^{5} |T^d|\leq 5\sum_{d=0}^{5} |U^d|.$
\item[(ii)] If $t\in T^d$ with $d\geq 6$ has an uncrossed edge to a $U$-neighbour, then $t$ belongs to $T_\mu$.
\item[(iii)] If $t\in T^d$ with $d\geq 6$ has a neighbour $u$  in $\bigcup_{j\leq 5} U^j$, then $u\in U^5$.
\end{enumerate}
\end{observation}

In particular, using (i) the vertices in $\bigcup_{d\leq 5} T^d$ can be accounted for
directly by the matching edges  incident to $\bigcup_{d\leq 5} U^d$.
We only need to bound the remaining $T_H$-vertices, which are those in $T_\sigma$ 
as well as those in $\bigcup_{d\geq 6} T^d$.    

\paragraph{The easy transformations.}

We now explain the contractions mentioned in the outline.  However, we prefer to view
them as ``transformations'' (consisting of deleting one vertex $t$ and inserting
one edge $(s,u)$ between neighbours of $t$) since the route used for the new edge must
be described carefully.  We first do the ``easy'' transformations where $s$ can be chosen
to be the matching-partner of $t$.  Let $I$ be the 1-planar drawing
obtained from $H^+$ as follows:

\begin{itemize}
\item Delete all vertices in $\bigcup_{d\leq 5} (U^d \cup T^d)$.
\item For any remaining vertex $t\in T_H$, delete all edges to $U$-neighbours except
	the assignment-edge (if any).
\item While there exists a vertex $t\in T_H\setminus T_\sigma$ for which either the
		assignment-edge $(t,u)$ or the matching-edge $(s,t)$ is uncrossed:
	\begin{itemize}
	\item Delete $t$.
	\item If the path $s$-$t$-$u$ had a crossing $c$ with an edge $(x,y)$ that
		has an endpoint in $\{s,u\}$:  Insert $(s,u)$ as a kite-edge of crossing $c$.
		We call this a {\em $\kappa$-transformation} (the ``$\kappa$'' reminds of ``kite'').
	\item Otherwise insert $(s,u)$, drawing it along path $s$-$t$-$u$, and observe that
		this has at most one crossing and gives a good drawing.  We call this
		a {\em $\pi$-transformation} (the ``$\pi$'' reminds of ``path'').
	\end{itemize}
\end{itemize}

Figure~\ref{fig:example}(a-c) shows an example of this transformation algorithm.
Note that the transformation will remove {\em all} vertices in $T_\mu$, and it may also
remove some vertices of $T_\rho$, since their matching-edges and/or assignment-edges may
become uncrossed as other vertices are deleted.  Let $T_\mu'$ be all those vertices
that were removed and let $T_\rho'\subseteq T_\rho$ be the vertices $T_H\setminus \bigcup_{d\leq 5}T^d \setminus T_\sigma\setminus T_\mu'$,
i.e., all those vertices that still need to be removed.
%\todo{Fig.~\ref{fig:example} doesn't have 6 vertices assigned to $U$-vertices. We can probably ignore that, but if we ever totally redraw the figure (which I'm tempted to do because it would be lovely to have an example where we create a bigon) then perhaps we could consider that.}

\renewcommand{\floatpagefraction}{0.9}
\begin{figure}[ht]
\subfigure[~]{\includegraphics[page=4,width=0.44\linewidth]{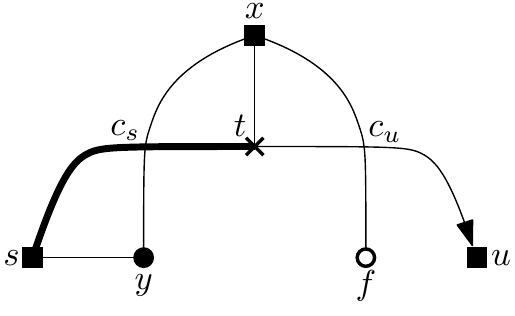}}
\hspace*{\fill}
\subfigure[~]{\includegraphics[page=5,width=0.44\linewidth]{transform_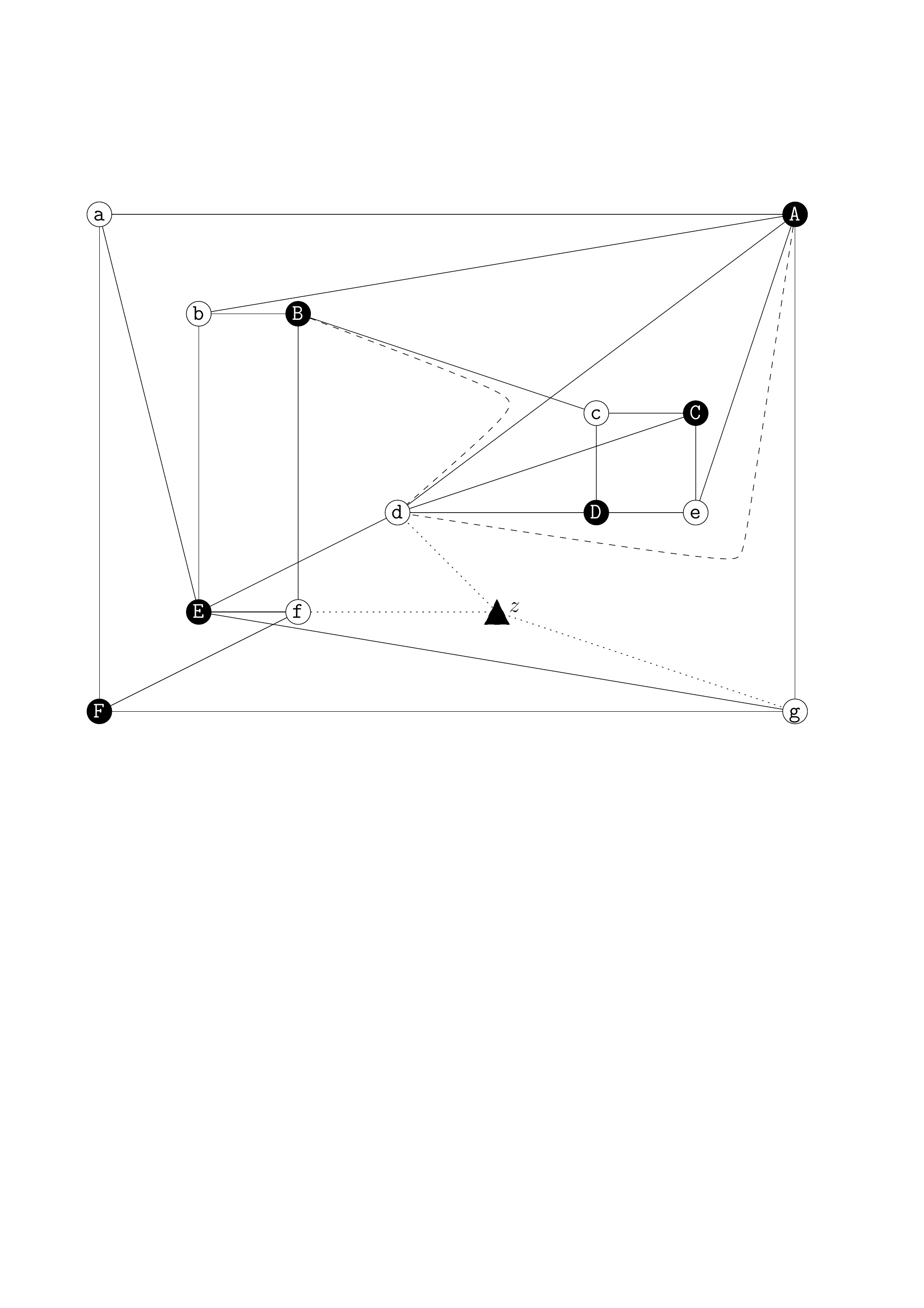}}
\newline
\subfigure[~]{\includegraphics[page=6,width=0.44\linewidth]{transform_example.pdf}}
\hspace*{\fill}
\subfigure[~]{\includegraphics[page=7,width=0.44\linewidth]{transform_example.pdf}}
\caption{(a) A part of a graph $H^+$;  arrows indicate assignments.  Vertices are labeled with the sets they belong to . Gray areas symbolize the remaining graph and its incident dotted edges are possible connections.  (b) An $\kappa$-transformation:  The new edge gets routed as kite-edge.  This causes one vertex in $T_\rho$ to move to $T_\mu'$.  (c) A $\pi$-transformation:  The new edge gets routed along the path through the eliminated vertex.  (d) A $\rho$-transformation.}
\label{fig:example}
\label{fig:transform}
\end{figure}

\begin{observation}
Drawing $I$ is simple and good and any vertex $u\in U^d$ with $d\geq 6$ has degree $d$ in $I$.
\end{observation}
\begin{proof}
Vertex $u$ had $d$ incident assignment-edges, say to $t_1,\dots,t_d$.  For each $t_i$,
if $t_i\in T_\rho'$ then it remains a neighbour in $I$.  If $t_i\in T_\mu'$, then
edge $(t_i,u)$ in $H^+$ has been replaced by edge $(s_i,u)$ in $I$, where $s_i$ is
the matching-partner of $t_i$.  This does not create multiple edges since we deleted
all $SU$-edges that existed previously in $H^+$, and since no $s_i$ is matched to
two vertices.  The drawing is good since we use an $\kappa$-transformation if drawing
along the path $s_i$-$t_i$-$u$ would have violated goodness.  
\end{proof}

We call a region a {\em kite-region} if it is bounded by two half-edges at
a crossing $c$ and an uncrossed edge (necessarily a kite-edge of $c$).

\begin{figure}
\hspace*{\fill}
\subfigure[~]{\includegraphics[page=6]{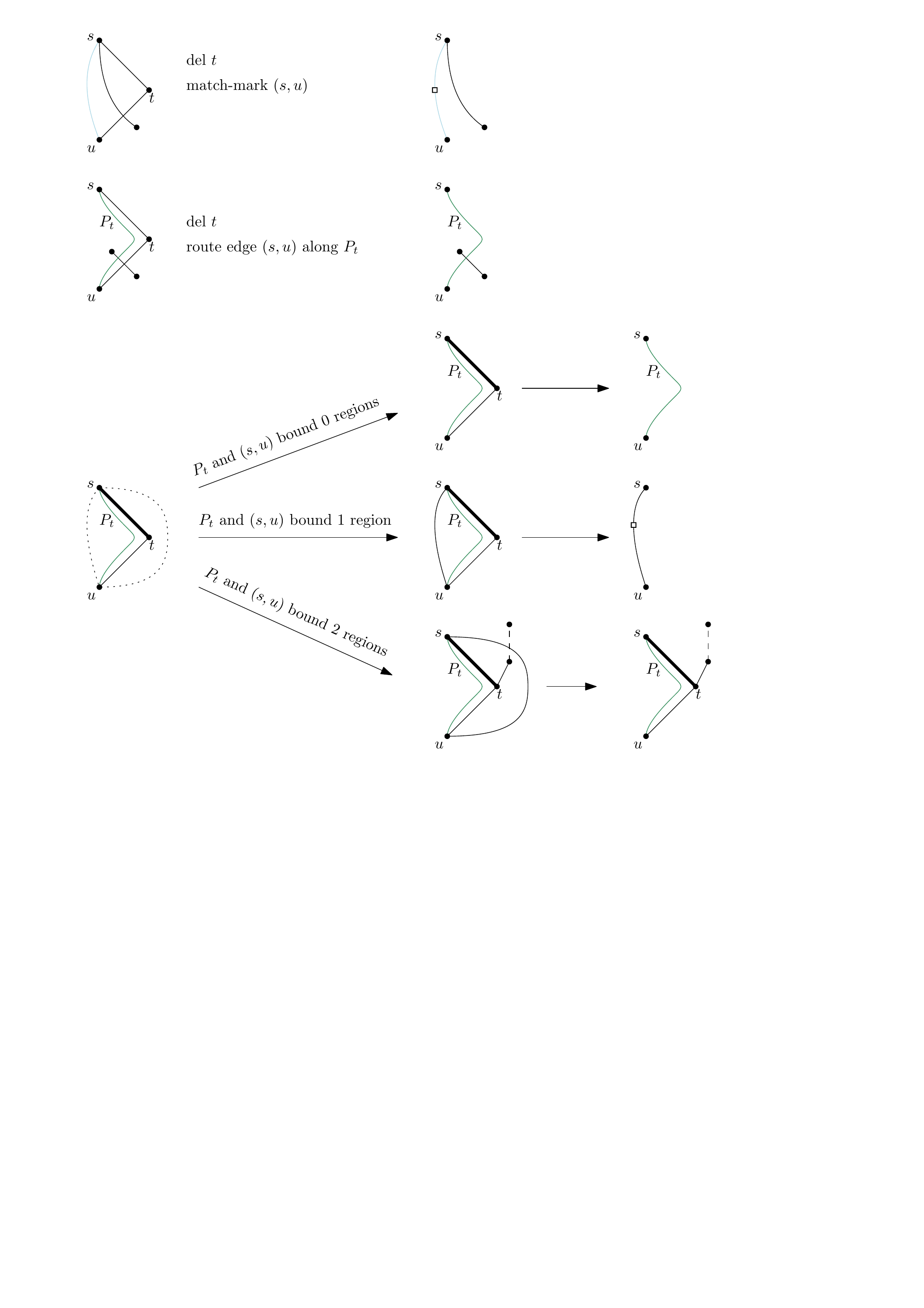}}
\hspace*{\fill}
\subfigure[~]{\includegraphics[page=5]{su_cases}}
\hspace*{\fill}
	\caption{Illustration of the situation in Claim~\ref{cl:kite_region} if $ t $ underwent (a) a $\kappa$-transoformation, and (b) a $ \pi $-transformation and $ x \in U^5 $.}
	\label{fig:notwokiteregions}
\end{figure}
% upgrade this from observation to claim; not trivial
\begin{claim}
\label{cl:kite_region}
Let $(s,u)$ be an uncrossed edge in $I$ with $s\in S$ and $u\in U$.  Then
at most one of the two regions incident to $(s,u)$ in $I$ is a kite-region. 
\end{claim}
\begin{proof}
Since we deleted $SU$-edges of $H^+$ as first step of the transformation,
edge $(s,u)$ was added during the transformation, say at vertex $t\in T_\mu'$
for which $(s,t)$ was a matching-edge.
If this was an $\kappa$-transformation, then $(s,u)$ is drawn as kite-edge
of some crossing $c$ that involved either $(s,t)$ or $(t,u)$.  But the edges
$s$-$t$-$u$ are removed from the drawing, so $c$ no longer exists as crossing
in $I$.  So in $I$, the region incident to $(s,u)$ where $c$ used to be is not a kite-region.
See Figure~\ref{fig:notwokiteregions}(a).

Now assume that $t$ underwent a $\pi$-transformation, and assume for contradiction
that there are two crossings $c_1,c_2$ at which the inserted edge $(s,u)$ bounds 
kite-regions.   Therefore in drawing $I$ we have the curve $C:=s$-$c_1$-$u$-$c_2$-$s$
that contains $(s,u)$ on one side (say the inside) and everything else on the
other side. Figure~\ref{fig:notwokiteregions}(b) gives an illustration. 
In drawing $H^+$, curve $C$ also existed, and contained $t$ on the
inside.  Since no crossed edge can be crossed again,
any neighbours of $t$ in $H^+$ lies on or inside $C$.  There are at least three
neighbours of $t$ by Observation~\ref{obs:Hplus}.  Two of these are $s$ and $u$,
but there exists a third neighbour $x\in S\cup U$ of $t$ inside $C$.  If $x\in S$ or
$x\in U^d$ for $d\geq 6$ then
$x$ also exists in $I$ and lies inside $C$, contradicting that we had only the
kite-regions inside $C$ in $I$.  So $x\in \bigcup_{d\leq 5} U^d$, which by Observation~\ref{obs:Tsmall}
means $x\in U^5$.  So at least one other $T_H$-vertex $t'$ is assigned to $x$.  Let $s'$
be the matching-partner of $t'$, so $s'\neq s$.  Path $x$-$t'$-$s'$ is disjoint from $s,u$ and therefore
must also lie inside $C$ in $H^+$.  Since $s'\in S$ remains in $I$, this is again a
contradiction.
\end{proof}

\paragraph{The vertices in $T_\rho'$.}

The vertices in $T_\rho'$ that could not be removed yet have a special
structure in their neighbourhood that is illustrated in Figure~\ref{fig:Tpsi}(a).

\begin{figure}[ht]
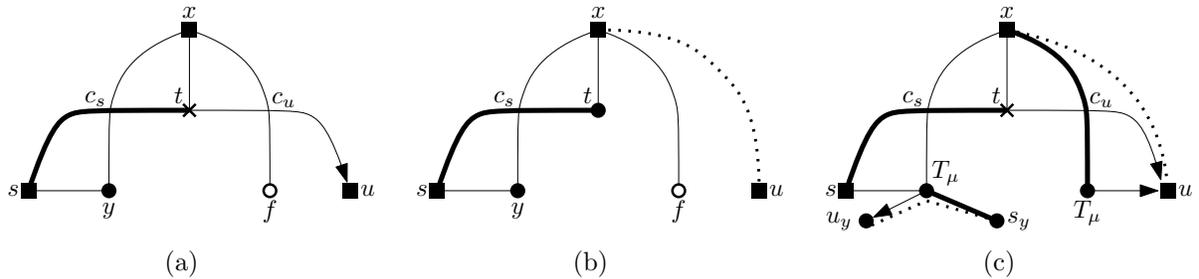

\hspace*{\fill}
\subfigure[~]{\includegraphics[page=1,width=0.3\linewidth]{transform_example.pdf}}
\hspace*{\fill}
\subfigure[~]{\includegraphics[page=2,width=0.3\linewidth]{transform_example.pdf}}
\hspace*{\fill}
\subfigure[~]{\includegraphics[page=3,width=0.3\linewidth]{transform_example.pdf}\label{fig:ifmu}}
\hspace*{\fill}
\caption{(a) The structure at a vertex $t\in T_\rho'$.  (b) Doing a $\rho$-transformation at $t$.
(c) If $y\in T_\mu$ or $f\in T_\mu$, then $c_s$ respectively $c_u$ would not exist in $I$.}
\label{fig:Tpsi}
\end{figure}

\begin{lemma}
\label{lem:Tpsi}
Fix a vertex $t\in T_\rho'$.  Let $(t,s)$ and $(t,u)$ be its matching-edge and assignment-edge.  
Then in drawing $I$:
\begin{enumerate}
\item[(a)] The matching-edge $(t,s)$ is crossed by some edge $(x,y)$ with $x\in S$.
\item[(b)] The assignment-edge $(t,u)$ is crossed by some $(x',f)$ with $x'\in S$. 
\item[(c)] $x=x'$ and the edge $(t,x)$ exists as uncrossed edge. 
%\item[(d)] $y$ has three $S$-neighbours and is not the matching-partner of $x$.
\item[(d)] $y\in F_H \cup T_\sigma$.
\item[(e)] $f\in F_H$.
%\item[(f)]  Vertices $f,t,y$ all belong to one region of drawing $I_{\ell-1}$.
\end{enumerate}
\end{lemma}
\begin{proof}
We know that $(t,s)$ and $(t,u)$ are both
crossed in $I$ (else $t$ would have been transformed).  Since we never
add crossings, they were also crossed in $H^+$,
say $(t,s)$ was crossed by some edge $(x,y)$ at $c_s$ and $(t,u)$ was crossed 
by some $(x',f)$ at $c_u$.    We first show that (a-e) hold in $H^+$, i.e., 
for these vertices $x,x',y,f$.  By Observation~\ref{obs:Hplus}(a), 
up to renaming $x,x'\in S\cup U$.  By Observation~\ref{obs:Hplus}(c), 
edges $(t,x)$ and $(t,x')$ exist as uncrossed edges.  This implies
$x,x'\in S$ because
otherwise $t$ would have an uncrossed edge to a $U$-neighbour, contradicting
its crossed assignment-edge $(t,u)$.   This proves (a) and (b).

Since edge $(t,s)$ is crossed while $(t,x)$ and $(t,x')$ are uncrossed, we know $x\neq s\neq x'$.  
If $x\neq x'$ then $t$ has three $S$-neighbours and would belong to $T_\sigma$ rather than
$T_\rho$.  So $x=x'$, which proves (c).

To prove (d), recall that by Observation~\ref{obs:Hplus}(a)
and $x\in S$ we know $y\in T_H\cup F_H$.
Assume that $y\in T_H$ (else (d) holds).  Then $y$ has the $S$-neighbours
$s$ and $x$ since we added kite-edges.  Its matching-partner is not $s$ (since $(s,t)$ is
a matching-edge) and also not $x$ (else $(s,t)$ and $(x,y)$ would cross each other).
So its matching-partner is a third $S$-neighbour $s_y$.   Since $y$ has three $S$-neighbours,
we hence either add it to $T_\mu$ or to $T_\sigma$.  But we cannot have $y\in T_\mu$, else
it would have been transformed and replaced by edge $(u_y,s_y)$ for some $u_y\in U$, see Figure~\ref{fig:ifmu}.  By $s_y\neq x$ this
eliminates crossing $c_s$,
contradicting that $(t,s)$ remains crossed in $I$.  So (d) holds.

Proving (e) is similar.  By $x\in S$ we know $f\in T_H\cup F_H$.  
Assume for contradiction that $f\in T_H$.
This implies $f\in T_\mu$ since $(f,u)$ is a potential kite-edge, hence exists uncrossed in $H^+$.
So the transformations
delete $f$ and insert edge $(u_f,s_f)$ where $s_f$ and $u_f$ are the
matching-partner and assigned $U$-neighbour.  If $s_f\neq x$, 
then this will eliminate crossing $c_u$.  If $s_f=x$, then our choice of
assignment-edges ensured that $u_f=u$ and
the $\kappa$-transformation inserts $(s_f,u_f)=(x,u)$ as kite-edge of $c_u$.
See Figure~\ref{fig:ifmu}.
So either way $(t,u)$ is no longer crossed in $I$, a contradiction and (e) holds.

So we have proved (a-e) for the edges that crossed $(t,s)$ and $(t,u)$ in $H^+$.
But since their endpoints belong to $S\cup U\cup F_H\cup T_\sigma$, none of
the transformations affect these edges, so the same situation holds in $I$.
\iffalse
Finally to prove (3), recall that we chose $u$ carefully such that
$(t,x),(t,s),(t,u)$ are consecutive at $t$ in $I_0$.  No new edges are
inserted at $T_I$-vertices that have not yet been transformed, so 
they remain consecutive in $I_{\ell-1}$.
Edges $(t,y)$ and $(t,f)$ are potential kite-edges, so if we inserted them
into the order of edges at $t$ it would become $(t,s),(t,y),(t,f),(t,u)$.
So $y$ and $f$ are consecutive at $t$, which means that $y,t,f$ belong to
one region of $I_{\ell-1}$.
\fi
\end{proof}

We now finish our transformations as follows:
\begin{itemize}
\item For any $t\in T_\rho'$, let the assignment-edge be $(t,u)$, and let $(x,f)$
	be the edge that crosses it at $c_u$.
	Delete $t$ and insert $(x,u)$ as kite-edge of $c_u$.
	Insert this edge even if it existed already in $I$.  We call this
	a {\em $\rho$-transformation}.
\end{itemize}

Observe that the only edge that is made uncrossed by a $\rho$-transformation is
$(x,f)$.  Neither of its endpoints belongs to $T_\rho'$, so no $\rho$-transformation
affects the status of crossed edges at some other vertex $t'\in T_\rho'$.  Consequently,
Lemma~\ref{lem:Tpsi} continues to hold for all drawings during all $\rho$-transformations.
Let $J$ be the final drawing. Observe that $J$ is
specfically allowed to have multi-edges (but no loops).  It may even have an
{\em empty lens} (two uncrossed copies of an edge that bound a region).  But it
does not have an {\em empty $\Theta$}, i.e., three uncrossed copies of an edge such that
two of the three areas between them are regions of the drawing.  

% upgrading from observation to claim.
\begin{claim}
If $J$ has multiple copies of some edge $(x,u)$, then all but at most one copy
are uncrossed, and $J$ has no empty $\Theta$.
\end{claim}
\begin{proof}
Since $I$ is simple, a second copy of $(x,u)$ can be inserted only during
a $\rho$-transformation.  This adds $(x,u)$ as an uncrossed kite-edge, so
only the copy of $(x,u)$ that may have existed in $I$ can be crossed.

Now assume for contradiction that $(x,u)$ exists in three uncrossed copies $e_1,e_2,e_3$ that 
bound an empty $\Theta$, say the regions bounded by $e_1,e_2$ and $e_2,e_3$ are
empty lenses.  We necessarily have $x\in S$ and $u\in U$, since no
other edges are inserted in $J\setminus I$.  
Since $I$ is simple, at least two 
of $e_1,e_2,e_3$ were inserted with $\rho$-transformations.  

A copy of $(x,u)$ is inserted during a $\rho$-transformation only if
it was the kite-edge of some crossing $c_u$ between edge $(x,f)$
(for some $f\in F_H$)
and edge $(t,u)$ (where $t\in T_\rho'$ is being transformed).
Edge $(x,f)$ becomes consecutive with the newly inserted copy of $(x,u)$,
and both are uncrossed, which means that one region incident
to this copy of $(x,u)$ is not an empty lens in $J$.  Therefore $e_2$ cannot
have been inserted with $\rho$-transformations.

The only remaining possibility is that $e_2$
already existed in $I$ while $e_1$ and $e_3$ were inserted with $\rho$-transformations.
But then both regions incident to $e_2$ in $I$ must have been kite-regions,
contradicting Claim~\ref{cl:kite_region}.
\end{proof}

\paragraph{Crossing-weighted degrees.}
For any vertex $v$, let the {\em crossing-weighted degree} of $v$ be
the degree plus the number of incident uncrossed edges.  Let $J^-$
be the drawing obtained from $J$ by deleting one edge from each empty lens.

\begin{observation}
\label{obs:degreeU}
For any vertex $u\in U^d$ with $d\geq 6$, the crossing-weighted degree of $u$ in $J^-$
is at least $d$ and the degree is at least 3.
\end{observation}
\begin{proof}
We know that $u$ had degree $d$ in $I$, and with the same argument
it has degree $d$ in $J$.  Assume $u$ was incident to $k$ empty lenses
in $J$.    Since there are no empty $\Theta$'s, edges in empty lenses
come in pairs; in particular $k\leq d/2$ and we delete $k$ edges.
Then the degree of $u$ in $J^-$ is $d-k\geq d/2=3$. The $k$ edges that
remain from the empty lenses are all uncrossed,  so $u$ has 
crossing-weighted degree at least $(d-k)+k=d$.
\end{proof}

The matching bound now follows by applying Lemma~\ref{lem:BW} to drawing $J^-$
and combining it with all other inequalities we derived earlier.    We actually
need a slightly modified version of Lemma~\ref{lem:BW}.

\begin{lemma}[\cite{BW19}]
\label{lem:BWstronger}
Let $G$ be a 1-planar graph and $A$  be a non-empty independent set in $G$
where all vertices of $A$ have degree 3 or more.  Let 
$W_d$ be the vertices in $A$ that have crossing-weighted degree $d$. 
Assume that $G$ is either simple or it has no empty lens and for any multiple edge at most
one copy is uncrossed.
Then $$2|W_3|+2|W_4| +\sum_{d\geq 5} (3d{-}12)|W_d| \leq 12|V\setminus A|-24.$$
\end{lemma}

(The lemma is stated in \cite{BW19} only for simple graphs, but as pointed
out in a later section of \cite{BW19} it holds for non-simple graphs 
under the above assumption.)

The proof of Lemma~\ref{lem:hard} is now finished as follows.
The vertices in $F_H\cup T_\sigma\cup \bigcup_{d\geq 6} U^d$ are independent in $J^-$
because all neighbours of $F_H$ are in $S$ and we deleted all $UT_\sigma$-edges during the transformations.
They also have degree 3 or more by choice of $T_\sigma$ and Observation~\ref{obs:degreeU}.
The remaining vertices of $J^-$ are $S$.
Using Lemma~\ref{lem:BWstronger} we get
\begin{eqnarray*}
12|S|-24 & \geq & 2|F_H|+2|T_\sigma| +  \sum_{d\geq 6} (3d{-}12) |U^d| 
\geq 2|F_H|+2|T_\sigma| + \sum_{d\geq 6} (2d{-}10) |U^d| \\
& \geq & 2|F_H| + 2|T_\sigma| + 2\sum_{d\geq 6} |T^d| - 10\sum_{d\geq 6} |U^d|
\end{eqnarray*}
Rearranging and adding $10 \sum_{d\leq 5} |U^d| \geq 2\sum_{d\leq 5} |T^d|$ gives
\begin{eqnarray*}
12|S|+ 10 \sum_{d\geq 0} |U^d| -24 & \geq & 2|F_H|+2|T_\sigma| + 2\sum_{d\geq 1} |T^d| = 2|F_H|+2|T_H|.
\end{eqnarray*}
%This proves Lemma~\ref{lem:uvertices} by $|U|=\sum_d |U^d|$.
This proves Lemma~\ref{lem:hard} by $|U|=\sum_d |U^d|$.

\section{Summary and outlook}
\label{sec:conclusion}

In this paper, we considered how to find a large matching in a 1-planar graph with minimum degree 3.  We argued that any matching without augmenting paths of length up to 9 has size at least $\frac{n+12}{7}$, which is the largest that one can hope for in a 1-planar graph with minimum degree 3.  Such a matching can easily be found in linear time, even if no 1-planar drawings is known, by stopping the matching algorithm by Micali and Vazirani after a constant number of rounds.

It remains open how to find large matchings in 1-planar graphs with minimum degree $\delta>3$; we can argue some lower bounds on the size of matchings without 9-augmenting paths, but these are not tight.  It would also be interesting to study other near-planar graph classes such as $k$-planar graphs (for $k>1$); here we do not even know what tight matching-bounds exist and much less how to find matchings of that size in linear time.

\bibliographystyle{plain}
\bibliography{paper.bib}

\begin{thebibliography}{10}

\bibitem{Baker94}
Brenda~S. Baker.
\newblock Approximation algorithms for {NP}-complete problems on planar graphs.
\newblock {\em Journal of the ACM}, 41(1):153--180, 1994.

\bibitem{bastMatchingAlgorithms2006}
Holger Bast, Kurt Mehlhorn, Guido Sch{\"a}fer, and Hisao Tamaki.
\newblock Matching algorithms are fast in sparse random graphs.
\newblock {\em Theory of Computing Systems}, 39(1):3--14, 2006.

\bibitem{Berge}
Claude Berge.
\newblock {\em Graphs and Hypergraphs, 2nd edition}.
\newblock North-Holland, 1976.
\newblock {Translated from {\em Graphes et Hypergraphes, Dunod, 1970}}.

\bibitem{BBD+01}
Therese Biedl, Prosenjit Bose, Erik~D. Demaine, and Anna Lubiw.
\newblock Efficient algorithms for {P}etersen's matching theorem.
\newblock {\em Journal of Algorithms}, 38(1):110--134, 2001.

\bibitem{BW19}
Therese Biedl and John Wittnebel.
\newblock Matchings in 1-planar graphs with large minimum degree, 2019.
\newblock CoRR 1911.04603.

\bibitem{BL76}
Kellogg~S. Booth and George~S. Lueker.
\newblock Testing for the consecutive ones property, interval graphs and graph
  planarity using {PQ}-tree algorithms.
\newblock {\em Journal of Computing and System Sciences}, 13:335--379, 1976.

\bibitem{CN89}
Norishige Chiba and Takao Nishizeki.
\newblock The {H}amiltonian cycle problem is linear-time solvable for
  $4$-connected planar graphs.
\newblock {\em Journal of Algorithms}, 10(2):187--211, 1989.

\bibitem{COS01}
Richard Cole, Kirstin Ost, and Stefan Schirra.
\newblock Edge-coloring bipartite multigraphs in {$O(E\log D)$} time.
\newblock {\em Combinatorica}, 21(1):5--12, 2001.

\bibitem{DFHT05}
Erik~D. Demaine, Fedor~V. Fomin, Mohammadtaghi Hajiaghayi, and Dimitrios~M.
  Thilikos.
\newblock Subexponential parameterized algorithms on bounded-genus graphs and
  {H}-minor-free graphs.
\newblock {\em Journal of the {ACM}}, 52(6):866--893, 2005.

\bibitem{Die12}
Reinhard Diestel.
\newblock {\em Graph Theory, 4th Edition}, volume 173 of {\em Graduate texts in
  mathematics}.
\newblock Springer, 2012.

\bibitem{Edm65}
Jack Edmonds.
\newblock Maximum matchings and a polyhedron with 0,1-vertices.
\newblock {\em Journal of Research of the National Bureau of Standards},
  69B:125--130, 1965.

\bibitem{frankeComputingLarge2011}
Robert Franke, Ignaz Rutter, and Dorothea Wagner.
\newblock Computing large matchings in planar graphs with fixed minimum degree.
\newblock {\em Theoretical Computer Science}, 412(32):4092--4099, 2011.

\bibitem{GB07}
Alexander Grigoriev and Hans~L. Bodlaender.
\newblock Algorithms for graphs embeddable with few crossings per edge.
\newblock {\em Algorithmica}, 49(1):1--11, 2007.

\bibitem{Hal56}
Marshall Hall, Jr.
\newblock An algorithm for distinct representatives.
\newblock {\em American Mathematical Monthly}, 63:716--717, 1956.

\bibitem{HK73}
John~E. Hopcroft and Richard~M. Karp.
\newblock An $n\sp{5/2}$ algorithm for maximum matchings in bipartite graphs.
\newblock {\em {SIAM} Journal on Computing}, 2:225--231, 1973.

\bibitem{HT74}
John~E. Hopcroft and Robert~E. Tarjan.
\newblock Efficient planarity testing.
\newblock {\em Journal of the {ACM}}, 21(4):549--568, 1974.

\bibitem{KLM17}
Stephen~G. Kobourov, Giuseppe Liotta, and Fabrizio Montecchiani.
\newblock An annotated bibliography on 1-planarity.
\newblock {\em Computer Science Review}, 25:49--67, 2017.

\bibitem{LP86}
Laszlo Lov{\'a}sz and Michael~D. Plummer.
\newblock {\em Matching theory}.
\newblock North-Holland Publishing Co., Amsterdam, 1986.
\newblock Annals of Discrete Mathematics, 29.

\bibitem{micaliSqrtAlgorithm1980}
Silvio Micali and Vijay~V. Vazirani.
\newblock An ${O(\sqrt{|V|}|E|)}$ {{Algoithm}} for {{Finding Maximum Matching}}
  in {{General Graphs}}.
\newblock In {\em Proceedings of the 21st Annual Symposium on Foundations of
  Computer Science ({{FOCS}}'80)}, pages 17--27. {IEEE Computer Society}, 1980.

\bibitem{NB79}
Takao Nishizeki and Ilker Baybars.
\newblock Lower bounds on the cardinality of the maximum matchings of planar
  graphs.
\newblock {\em Discrete Mathematics}, 28(3):255--267, 1979.

\bibitem{Petersen}
Julius Petersen.
\newblock {Die Theorie der regul\"aren graphs (The theory of regular graphs).}
\newblock {\em Acta Mathematica}, 15:193--220, 1891.

\bibitem{SchaeferBook}
Marcus Schaefer.
\newblock {\em Crossing Numbers of Graphs}.
\newblock CRC Press, 2017.

\bibitem{Tutte}
William~T. Tutte.
\newblock A theorem on planar graphs.
\newblock {\em Transactions of the American Mathematical Society}, 82:99--116,
  1956.

\bibitem{Vaz12}
Vijay~V. Vazirani.
\newblock A simplification of the {MV} matching algorithm and its proof.
\newblock {\em CoRR}, abs/1210.4594, 2012.

\end{thebibliography}
\end{document}